\documentclass[10pt,journal,compsoc,twocolumn]{IEEEtran}

\usepackage{stmaryrd}
\usepackage{amsfonts}
\usepackage{mathrsfs,amsmath}
\usepackage{amssymb}
\usepackage{makeidx}
\usepackage{multirow}
\usepackage{algorithmic} 
\usepackage{bm}
\usepackage{setspace}
\usepackage{amsthm}
\usepackage{empheq}
\usepackage{booktabs}
\usepackage[ruled,linesnumbered,vlined]{algorithm2e}  

    \SetCommentSty{mycommfont}
\usepackage{color}
\usepackage{enumerate}
\usepackage[left=0.65in, right=0.65in, top=0.7in, bottom = 0.98in]{geometry}
\usepackage{makecell}
\usepackage{threeparttable}
\usepackage{booktabs}
\usepackage{tabularx}
\usepackage{subcaption} 
\usepackage{array}
\usepackage[font=footnotesize]{caption}
\usepackage{tabularx,booktabs}
\usepackage{ifpdf}
 \ifpdf
 \else
 \fi
 
\usepackage{cite}
 
\ifCLASSINFOpdf
   \usepackage[pdftex]{graphicx}
\else
\fi
 
\usepackage{amsmath}
\usepackage[percent]{overpic}

\usepackage{array} 
\usepackage{tikz,pgfplots,filecontents}
\usetikzlibrary{spy} %
\usetikzlibrary{patterns}
\pgfplotsset{width=8.6cm, height=5cm, compat=1.9}
 
\usepackage{url}
\usepackage{hyperref}

\captionsetup[sub]{
  labelformat=simple
}

\makeatletter
\let\NAT@parse\undefined
\makeatother

\newtheorem{theorem}{Theorem}
\newtheorem{lemma}[theorem]{Lemma}

\newtheorem{remark}[theorem]{Remark}

\hyphenation{}
\IEEEoverridecommandlockouts   

\allowdisplaybreaks

\pagenumbering{arabic}

\pgfplotsset{every axis legend/.style={%
    cells={anchor=west},
    inner xsep=3pt,inner ysep=2pt,nodes={inner sep=0.8pt,text depth=0.15em},
    anchor=north east,%
    shape=rectangle,%
    fill=white,%
    draw=black,
    at={(0.98, 0.98)},
    font=\footnotesize,
    }
}

\pgfplotsset{every axis/.append style={line width=0.6pt,tick style={line width=0.8pt}}}




\title{Online Learning for Intelligent Thermal Management of Interference-coupled and Passively Cooled Base Stations} 

\begin{document}

\author{Zhanwei~Yu,~\IEEEmembership{Student Member,~IEEE,}
        Yi~Zhao~\IEEEmembership{Student Member,~IEEE,}~Xiaoli~Chu,~\IEEEmembership{Senior~Member,~IEEE}, and~Di~Yuan,~\IEEEmembership{Senior~Member,~IEEE}
\IEEEcompsocitemizethanks{\IEEEcompsocthanksitem Z. Yu, Y. Zhao, and D. Yuan (\{zhanwei.yu; yi.zhao; di.yuan\}@it.uu.se) are with the Department of Information Technology, Uppsala University, Uppsala, Sweden.
\IEEEcompsocthanksitem X. Chu (x.chu@sheffield.ac.uk) is with the Department of Electronic and Electrical Engineering, University of Sheffield, Sheffield, UK.
}}


\IEEEtitleabstractindextext{%
    \begin{abstract} Passively cooled base stations (PCBSs) have
    emerged to deliver better cost and energy efficiency. However,
    passive cooling necessitates intelligent thermal control via
    traffic management, i.e., the instantaneous data traffic or
    throughput of a PCBS directly impacts its thermal
    performance. This is particularly challenging for outdoor
    deployment of PCBSs because the heat dissipation efficiency is
    uncertain and fluctuates over time.  What is more, the PCBSs are
    interference-coupled in multi-cell scenarios. Thus, a
    higher-throughput PCBS leads to higher interference to the other
    PCBSs, which, in turn, would require more resource consumption to
    meet their respective throughput targets. In this paper, we
    address online decision-making for maximizing the total downlink
    throughput for a multi-PCBS system subject to constraints related on
    operating temperature.  We demonstrate that a reinforcement
    learning (RL) approach, specifically soft actor-critic (SAC), can
    successfully perform throughput maximization while keeping the
    PCBSs cool, by adapting the throughput to time-varying heat
    dissipation conditions. Furthermore, we design a denial and reward
    mechanism that effectively mitigates the risk of overheating
    during the exploration phase of RL. Simulation results show that
    our approach achieves up to 88.6\% of the global optimum. This is
    very promising, as our approach operates without prior knowledge
    of future heat dissipation efficiency, which is required by the
    global optimum.  \end{abstract}

    \begin{IEEEkeywords}
        Reinforcement learning, interference, passive cooling, throughput maximization, thermal management. 
    \end{IEEEkeywords}
}

\maketitle

\IEEEdisplaynontitleabstractindextext

%
\IEEEpeerreviewmaketitle

\section{Introduction}

The rapid expansion of 5G networks has increased data, but also energy consumption. As a result, operators face higher operational expenses. The study in \cite{alliance2021network} shows that active cooling consumes up to 40\% of the total energy usage by mobile networks. Recently, the concept of passively cooled base stations (PCBSs) has emerged~\cite{ericsson22}. Unlike traditional base stations, a PCBS uses passive cooling solutions, such as highly thermally conductive fillers (e.g., graphene, carbon nanotubes), to keep the baseband unit (BBU) within an acceptable temperature range, thus completely eliminating the energy cost due to active cooling. Moreover, PCBSs are also easier to install, reducing the deployment cost.

By its nature, managing the thermal performance becomes crucial for PCBS. The heat generation of a PCBS' BBU is directly mapped to the amount of traffic, or throughput, that the PCBS is delivering. Ideally, we would like to have a PCBS deliver as much throughput as possible while avoiding overheating. This is not an easy task for two reasons. First, the heat dissipation efficiency is an uncertain system parameter for an outdoor PCBS, as it is influenced by ambient temperature, wind speed, humidity, and solar exposure, etc. Second, for a multi-cell scenario, the PCBSs are interference-coupled. Delivering more throughput at one PCBSs means also more frequent transmissions, thus increasing the interference to the other PCBSs, which would have to allocate more time-frequency resource, to meet their respective throughput targets.  Hence, we not only need to consider the thermal matter, but also have to ensure that the resource limit is adhered to, accounting for the fact that the consumption of transmission resource of a PCBS depends also on the throughput set by the other PCBSs.

In this paper, we consider an outdoor multi-PCBS system deploying
orthogonal frequency-division multiple access (OFDMA) and
multiple-input multiple-output (MIMO) technology. The goal is to
maximize the sum throughput of PCBSs in the downlink, subject to the
BBU temperature limit and interference coupling of the PCBSs. This is
an online decision-making problem.  The BBU temperature evolution is
determined by the PCBS' throughput (i.e., the amount of traffic
served) and the heat dissipation efficiency.  The latter is an
uncertain system parameter that varies over time.  Hence we need to
deal with throughput maximization without knowing future heat
dissipation efficiency.  Our contributions to dealing with this online
optimization problem are as follows:

\begin{enumerate}
 \item We propose an online PCBS thermal management method based on a reinforcement learning (RL) approach, specifically soft actor-critic (SAC)~\cite{haarnoja2018soft}. The method maximizes the total downlink throughput adaptively and dynamically, subject to the communication resource and BBU temperature constraints.

\item Because of interference coupling, the consumption 
of time-frequency resource of a PCBS, referred to as load, is affected
by, in addition to the PCBS's own throughput, the load of other
PCBSs. As a result, throughput maximization cannot be performed
separately for each PCBS.  We model and analyze this load-coupling
system for MIMO transmission, facilitating the evaluation of an action
with respect to the required transmission resource in SAC.

\item Since exploration in the RL framework may cause overheating of the BBU, we provide a refined denial and reward mechanism to ensure that the RL approach can sufficiently explore while preventing BBU overheating. This, in fact, enables the RL approach to be used online without any pre-training.

\item Simulation shows very promising results. Specifically, the proposed approach delivers up to 88.6\% of the global optimum.  Note that achieving the global optimum is not possible in practice, as it requires solving a problem offline with complete information of heat dissipation efficiency in future (i.e., an oracle), whereas our method makes decisions online without prior knowledge of future heat dissipation efficiency. In addition, the learned policy has good generalization with respect to user location update.
\end{enumerate}


\section{Literature Review}

The works in \cite{liu2023passive, sui2023membrane, Aslan2019Passive, Aslan2019Heat, Duan2021Thermal} have studied passive cooling techniques for base stations. In \cite{liu2023passive}, the authors provide a passive cooling method using sorption-based evaporative cooling with a salt-embedded composite sorbent for 5G communication equipment, to significantly enhance heat dissipation performance. Similarly, in \cite{sui2023membrane}, the authors propose a passive thermal management strategy that relies on moisture desorption from hygroscopic salt solutions through a protective membrane that only allows water vapor to pass through. The research in \cite{Aslan2019Passive} conducted thermal simulations to evaluate the cooling performance of distinct CPU heatsinks. In \cite{Aslan2019Heat}, a novel heat dissipation structure has been designed with antennas and digital beamforming chips on opposite sides, demonstrating superior cooling efficiency compared to traditional designs. Paper \cite{Duan2021Thermal} explores various factors for efficient PCBS cooling, including environmental influences, strategic placement, and the use of innovative materials. Furthermore, some works focus on the optimization for passively cooled planar arrays. In \cite{Aslan1}, a system-driven convex algorithm is proposed for the synthesis of passively cooled planar phased arrays of mm-wave 5G base stations. In \cite{Aslan2}, a convex optimization algorithm is derived to synthesize the array layouts with minimized side lobes within a pre-defined cell sector.

There are works on thermal management in data centers and wireless communication systems through resource allocation. The study in \cite{data-center-1} addresses a holistic energy minimization problem in a data center with an active cooling system.  The server temperature is influenced by the thermal load, server inlet cold air temperature, and airflow rate. To ensure reliability, the server temperature must be kept below a threshold, which can be achieved by adjusting parameters such as fan speed. In addition, in a BBU pool of a cloud radio access network (C-RAN), the authors of \cite{c-ran-1} provide an algorithm for dynamic allocation of computing resource, integrating thermal characteristics in the optimization of power consumption and computing resource usage, such that the processors work at the proper temperature. In~\cite{c-ran-2}, the research highlights the challenge of minimizing computational power while optimizing computing resource usage efficiency under thermal constraints, due to that the static power grows with the operating temperature of core computing unit. The authors present an algorithm that dynamically adjusts voltage and frequency to save power while meeting traffic request deadlines. In \cite{Ran2023}, the authors propose an optimization framework based on deep reinforcement learning, to jointly optimize energy consumption from the perspectives of task scheduling and cooling control in data centers. Note that the computing resources (i.e., processors or BBU) in data centers and C-RAN are centralized in a place far from users. In contrast, thermal management in base stations needs to consider additional matters, such as user demand and signal processing, making it more complex. For example, the work in~\cite{Bao2023Thermal} studies task schedule and resource allocation in a UAV-and-Base-station hybrid-enabled mobile network with a passively cooled system, and with joint optimization of user admission policy, task scheduling, UAV hovering trajectory, traveling time, and computation-and-communication resource allocations.

In addition, both \cite{alsuhli2021mobility} and \cite{9855432} study traffic and load management of base stations using RL. The study in \cite{alsuhli2021mobility} introduces a deep RL framework for mobility traffic management in cellular networks, utilizing cell individual offset adjustments. Depending on network size, it employs enhanced deep Q-learning or advanced RL methods like deep deterministic policy gradient and twin delayed deep deterministic policy gradient. In \cite{9855432}, the authors introduce a multi-objective RL framework for traffic balancing for multi-band downlink cellular networks. Utilizing meta-RL and policy distillation techniques, the framework is designed to adapt to diverse objective trade-offs quickly. Besides, some research \cite{Chang2020, Israr2023, Wu2021, Amine2022} investigate sleep mode as a load management strategy for base stations. Different from our paper, the above works do not consider the use of passively cooling.

For single PCBS, we have addressed online throughput maximization in \cite{YuZhYoYu24}. In the current paper, the system model consists of multiple PCBSs, and the resulting optimization problem is much more challenging because the PCBSs are interference-coupled. That is, the resource allocation in one PCBS cell directly affects the performance in the other cells, and thus optimization cannot be done for the individual PCBSs separately. In addition, the extended system model addresses the effect of MIMO. By modeling these aspects to enable to incorporate them in the learning framework, the current paper contains significant extensions of \cite{YuZhYoYu24}.


\section{System Model and Problem Formulation}

Consider a multi-cell OFDMA MIMO system using PCBSs. The set of cells is denoted by $\mathcal{I} = \{1,2,\ldots,I\}$, and the set of users represented by $\mathcal{J} = \{1,2,\ldots,J\}$. The subset of users in cell $i$ is denoted by $\mathcal{J}_i$.

\subsection{Load Coupling in Multi-cell MIMO Systems} \label{subsec:loadcoupling}

In the downlink, for a PCBS and a user, denote by $N_T$ and $N_R$ the numbers of transmit and receive antennas, respectively. The communication takes place over discrete resource blocks (RBs), each having bandwidth $B$. The wireless channel from PCBS $i$ to user $j$ is narrowband and time-invariant, and it is characterized by an $N_R\times N_T$ matrix $\mathbf{H}_{ij} \in \mathbb{C}^{N_R\times N_T}$. The transmitted symbol vector $\mathbf{x}_{ij} \in \mathbb{C}^{N_T \times 1}$ is comprised by $N_T$ independent input symbols $x_{ij,1}, x_{ij,2}, \ldots, x_{ij,N_T}$. The received signal at user $j$ from PCBS $i$, $\mathbf{y}_{ij} \in \mathbb{C}^{N_T\times 1}$, can be expressed in matrix form as
\begin{align}
    \mathbf{y}_{ij} = \underbrace{\vphantom{\sum_{\ell \in \mathcal{I}
    \setminus i} \sum_{k\in \mathcal{J}_\ell}
    \sqrt{\frac{E}{N_T}}}\sqrt{\frac{E}{N_T}}\mathbf{H}_{ij}\mathbf{x}_{ij}}_{signal}
    + \underbrace{\sum_{\ell \in \mathcal{I} \setminus i} \sum_{k\in
    \mathcal{J}_\ell}
    \sqrt{\frac{E}{N_T}}\mathbf{H}_{\ell j}\mathbf{x}_{\ell k}}_{interference}
    + \underbrace{\vphantom{\sum_{\ell \in \mathcal{I} \setminus i}
    \sum_{k\in \mathcal{J}_\ell}
    \sqrt{\frac{\rho_{\ell k}E}{N_T}}}\mathbf{z}_j}_{noise},
\end{align}
where $\mathbf{H}_{\ell j}$ and $\mathbf{x}_{\ell j}$ denote the channel matrix and transmitted symbol vector by BS $\ell$ to user $j$, respectively, $E$ signifies the power of transmitted signals, and $\mathbf{z}_{j} \in \mathbb{C}^{N_R\times 1}$ represents the noise vector, which is assumed to be zero-mean circular symmetric complex Gaussian (ZMCSCG). We assume that the transmission power for each transmit antenna is one unit, such that we have the following constraint,
\begin{equation}
    \text{Tr}(\mathbf{x}_{ij}\mathbf{x}_{ij}^{\rm T}) = N_T.
\end{equation}8

As commented earlier, the amount of RBs needed by one PCBS is coupled with that of the other PCBSs due to the inter-cell interference. Henceforth, we use the term load to refer to the amount of RB consumption of a cell. (Note that this entity is different from the data throughput.) Apparently, the load is directly proportional to how frequent transmissions occur.  Hence load is tightly connected to interference generation. At the two extremes, zero load means no transmission and hence no interference, and 100\% load means full occupation of all RBs, and hence another PCBS will receive inter-cell interference, no matter which RBs to use. To some extent, load represents the likelihood of interference. In so called load-coupling model, the load of an interfering cell is incorporated into the denominator of the signal-to-interference-plus-noise ratio (SINR), in form of a scaling parameter in front of interference. This interference modeling approach has been widely used in various context, see, e.g., \cite{majewski2010conservative, fehske2012aggregation, Siomina2012Analysis, Awan2020Robust, klessig2015performance, fehske2013concurrent, cavalcante2016low, arani2023haps}, and its accuracy has been demonstrated in~\cite{fehske2012aggregation,klessig2015performance}, among others. In the current paper, we extend the model to our MIMO setting.

We remark that the load of a cell is the sum of that for serving each of its users. For a user, the corresponding value is determined by the user throughput and SINR. The latter, by the load-coupling model, depends on the load levels of other cells. This is the reason why the cell load levels are coupled, and the relation is given by a non-linear system. Let $\rho_{\ell k} \in [0,1])$ represent the load of cell $\ell$ due to serving user $k$.  With the load-coupling model, the interference received at another user $j$ in another cell $i$ is given by
\begin{align}\label{interference}
    \Phi_{ij} = \sum_{\ell \in \mathcal{I} \setminus \{i\}} \sum_{k\in \mathcal{J}_\ell} \rho_{\ell k} \frac{E}{N_T}\|\mathbf{H}_{\ell j}\mathbf{x}_{\ell k}\|^2.
\end{align}
Intuitively, $\rho_{\ell k}$ reflects the likelihood that a user outside cell $\ell$ receives interference from cell $\ell$ due to that cell $\ell$ serves its user $k$. Therefore, the SINR for user $j$ in cell $i$ is given by
\begin{equation}\label{SINR}
    \text{SINR}_{ij} = \frac{\frac{E}{N_T}\|\mathbf{H}_{ij}\mathbf{x}_{ij}\|^2}{\Phi_{ij} + B N_0},
\end{equation}
where $N_0$ is the noise power spectral density. 

For user $j$ in cell $i$, the rate on one RB is  $B\log\left(1+ \text{SINR}_{ij} \right)$. Hence, to deliver throughput $D_{ij}$, we obtain the following 
\begin{equation}\label{eq:coupling}
    D_{ij} = \rho_{ij} B\log\left(1+ \text{SINR}_{ij} \right)
\end{equation} 
Note that the above system contains the transmit vectors to be designed, as well as the load values that are coupled with each other with non-linear relations. Additionally, let $D_i$ represent the total throughput of cell $i$, which is bounded by the capacity of the BBU:
\begin{align}
D_i \le D^{\max}.
\end{align}

\subsection{Thermal Dynamics of PCBS}\label{subsec:PCBS}

We consider slotted time, and denote the time horizon under consideration by time slot set $\mathcal{T} = \{1,2,\ldots,T\}$. The duration of each time slot, denoted by $\delta$, is flexible and typically on the order of seconds to facilitate effective thermal management.  For PCBS $i$, each time slot $t$ is characterized by the BBU chip temperature $\Psi^t_i$, updated at the beginning of the slot. Subsequently, the term ``temperature" (with unit Celsius) refers to the BBU chip temperature, and the ambient temperature is denoted by $\hat{\Psi}^t_i$.

The interplay between heat generation and heat dissipation governs the thermal dynamics of the PCBS. Let $P^{\uparrow t}_i$ denote the heat generated and $P^{\downarrow t}_i$ the heat dissipated in time slot $t$, measured in watts (W). Based on fundamental heat transfer principles \cite{BBUcooling}, the temperature evolution can be described by
\begin{equation}\label{5}
    \Psi^{t+1}_i = \Psi^{t}_i + \lambda_i \delta (P^{\uparrow t}_i - P^{\downarrow t}_i),
\end{equation}
where $\lambda_i$ (in $^\circ$C/Joule) represents the reciprocal of the thermal capacitance of the BBU of PCBS $i$. The generation of heat $P^{\uparrow t}_i$, is linked to the BBU's chip power consumption, comprising both dynamic power $P^{\zeta t}_i$ and static power $P^{\vartheta t}_i$ components \cite{dynsta}, in watts:
\begin{equation}
    P^{\uparrow t}_i = P^{\zeta t}_i + P^{\vartheta t}_i.
\end{equation}

Dynamic power consumption $P^{\zeta t}_i$ is directly proportional to the computational workload of PCBS $i$ in time slot $t$. Fo time slot $t$, we use $D_i^t = \sum_{j \in \mathcal{J}_i} D_{ij}^t$, where $D_{ij}^t$ is the throughput of user $j$, to represent the throughput of cell $i$.  For PCBS $i$, the dynamic power can be expressed as~\cite{dyn}
\begin{equation}
    P^{\zeta t}_i = \mu_i D_i^t,
\end{equation}
where $\mu_i$ (in Watt/bit) is a coefficient specific to the BBU of PCBS $i$. Conversely, static power consumption $P^{\vartheta t}_i$ exhibits an exponential relationship with the chip temperature \cite{dynsta,sta}, described by
\begin{equation}
    P^{\vartheta t}_i = \alpha_i {\rm e}^{\beta_i \Psi^{t}_i} + \gamma_i,
\end{equation}
with parameters $\alpha_i$, $\beta_i$, and $\gamma_i$ characterizing the chip's material properties.

Heat dissipation efficiency for time slot $t$, denoted by $\sigma^t_i > 0$, is influenced by the temperature difference between the chip and its surroundings. Following Newton's law of cooling~\cite{BBUcooling}, the dissipated power is given by
\begin{equation}\label{9}
    P^{\downarrow t}_i = \sigma^t_i (\Psi^{t}_i - \hat{\Psi}^t_i).
\end{equation}
Based on \eqref{5}-\eqref{9}, the temperature at the start of time slot $t+1$ (or the end of slot $t$) can be determined by:
\begin{align}\label{temp_max}
    &\Psi^{t+1}_i \notag  \\
    \!=\! &\max\left\{ \Psi^{t}_i \!+\! \lambda_i \delta \left[\mu_i D_i^t \! + \! \alpha_i {\rm e}^{\beta_i \Psi^{t}_i} \!+\! \gamma_i \!-\! \sigma^t_i (\Psi^{t}_i \!-\! \hat{\Psi}^t_i) \right], \hat{\Psi}^{t+1}_i\right\}.
\end{align}

We can see from \eqref{temp_max} that the BBU chip temperature does not fall below the ambient temperature. A critical operational constraint is to maintain the BBU chip temperature below a safe threshold, $\bar{\Psi}$, across all time slots,
\begin{equation}
    \Psi^{t}_i \le \bar{\Psi}, \   t \in \mathcal{T}.
\end{equation}

\pgfplotsset{width=0.5\textwidth, height=4.5cm, compat=1.9}

\begin{figure}
    \centering
    \begin{tikzpicture}
        \begin{axis}[
            xtick style={draw=none}, 
            ytick pos=left, 
            axis on top,
            tick label style={font=\scriptsize},
            xtick align=inside,
            ylabel={Temperature ($^\circ$C)},
            xticklabels=\empty, 
            xmin=0, xmax=10,
            ymin=0, ymax=130,
            xmajorgrids=true,
            xtick={0, 1, 2, 3, 4, 5, 6, 7, 8, 9, 10},
            ytick={40, 80, 120},
            grid style=densely dashed,
            tick label style={font=\scriptsize},
            label style={font=\small},
            ylabel style={align=center},
            legend style={
                at={(0.615, 0.025)},
                anchor=south west,
                legend columns=1,
                font=\scriptsize
            },
            minor x tick num=1,
            minor tick length=0,
            grid style={dashed},
        ]
            \addplot[ color=blue, mark size=1pt, mark=square, line width=0.8pt]     
            coordinates {
                (0, 25) 
                (1, 40) 
                (2, 70) 
                (3, 110) 
                (4, 180) 
            };
                
            \addplot[ color=orange, mark size=1pt, mark=triangle, mark options={solid}, line width=0.8pt]
            coordinates {
                (0, 25) 
                (1, 35) 
                (2, 50) 
                (3, 51) 
                (4, 53) 
                (5, 52) 
                (6, 49) 
                (7, 51) 
                (8, 52) 
                (9, 51) 
                (10, 55) 
            };

            \addplot[ color=green!70!black, mark size=1pt, mark=o, mark options={solid}, line width=0.8pt]     
            coordinates {
                (0, 25) 
                (1, 37) 
                (2, 60) 
                (3, 85) 
                (4, 116) 
                (5, 92) 
                (6, 80) 
                (7, 114) 
                (8, 91) 
                (9, 83) 
                (10, 117) 
            };
            
            \addplot[ color=gray, densely dashed, mark options={solid}, line width=0.8pt]
            coordinates {
                (0, 120) 
                (100, 120) 
            };
            
            \legend{Aggressive, Conservative, (Naive) Adaptive}
        \end{axis}
    \end{tikzpicture}

    \begin{tikzpicture}
        \begin{axis}[
            xtick style={draw=none}, 
            ytick pos=left, 
            ylabel shift=8.5pt, 
            xlabel={Time slot},
            ylabel={Throughput (Mbps)},
            xmin=0, xmax=10,
            ymin=0, ymax=3.5,
            ytick={0, 1, 2, 3},
            xtick={0, 1, 2, 3, 4, 5, 6, 7, 8, 9, 10},
            grid style=densely dashed,
            xmajorgrids=true,
            legend style={
                at={(0.61, 0.03)},
                anchor=south west,
                legend columns=1,
                font=\scriptsize
            },
            tick label style={font=\scriptsize},
            label style={font=\small},
            minor x tick num=1,
            minor tick length=0,
            grid style={dashed},
        ]
            \addplot[ color=blue, densely dotted, line width=1.2pt]     
            coordinates {
                (0, 0.0)
                (1, 1.0)
                (2, 1.6)
                (3, 2.1)
                (4, 2.1)
                (5, 2.1)
                (6, 2.1)
                (7, 2.1)
                (8, 2.1)
                (9, 2.1)
                (10, 2.1)
            };

            \addplot[ color=orange, densely dotted, line width=1.2pt]     
            coordinates {
                (0, 0.0)
                (1, 0.6)
                (2, 0.8)
                (3, 1.05)
                (4, 1.22)
                (5, 1.43)
                (6, 1.62)
                (7, 1.84)
                (8, 2.02)
                (9, 2.23)
                (10, 2.44)
            };
                
            \addplot[ color=green!70!black, densely dotted, line width=1.2pt]
            coordinates {
                (0, 0.0)
                (1, 0.88)
                (2, 1.44)
                (3, 1.88)
                (4, 2.28)
                (5, 2.28)
                (6, 2.28)
                (7, 2.85)
                (8, 2.85)
                (9, 2.85)
                (10, 3.42)
            };
            \legend{Aggressive, Conservative, (Naive) Adaptive}
        \end{axis}
    \end{tikzpicture}

    \caption{A BBU chip temperature evolution and the accumulated throughput under three different throughput policies.}\label{three_policy}
\end{figure}
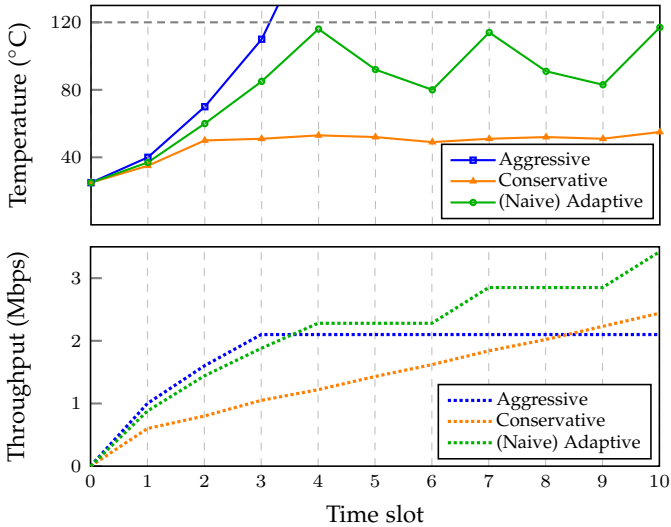

We can adjust the throughput delivered by the BBU in every time slot to control the chip temperature of a PCBS. We illustrate three throughput policies in Fig.~\ref{three_policy} as examples, to show the trade-off between throughput and temperature.  Here, the throughput represents the accumulated value.  For the illustration, $120^\circ$C is the maximum allowable temperature of the chip.

\begin{enumerate}
    \item {\bf Aggressive policy}: Under this policy, the BBU always attempts to handle the maximum possible workload. As a result, the BBU continuously generates heat exceeding the amount of heat dissipation, causing overheating. The throughput is higher than the two other policies at the beginning but cannot be sustained.
    \item {\bf Conservative policy}: With this policy, the BBU     delivers only a small amount of throughput, allowing the generated heat to dissipate within each time slot. This keeps the chip temperature around $50^\circ$C. However, as can be seen, the conservative action results in rather low throughput.
    \item {\bf (Naive) Adaptive policy}: This policy adapts the     throughput to the temperature evolution.  The temperature curve is in fact what we would normally expect from thermal management. However, note that the policy is still somewhat naive because it periodically sets the throughput to be zero for some time slots (shown by the plateaus in the accumulated throughput curve) to allow BBU to cool down.
\end{enumerate}

\begin{remark}
    The adaptive policy may be seemingly easy to design. This is however not the case for two reasons.  First, the heat dissipation efficiency of future time slots is not known in an online setting. Second, the throughput of the cells, as discussed earlier, cannot be set separately, because they together determine the radio resource consumption, i.e., load, and we need to ensure this consumption does not exceed what is available.
\end{remark}

\subsection{Problem Formulation}
For problem formulation of throughput maximization, we now denote explicitly the SINR$_{ij}$ as function of $\mathbf{x}_{ij}^t$ and interference $\Phi_{ij}^t$, as $\text{SINR}_{ij}^{t}(\mathbf{x}_{ij}^t, \Phi_{ij}^t)$. Note that $\Phi_{ij}^t$ is a function $\Phi_{ij}^t$ depends on $\mathbf{x}_{\ell k}^t$ and the load values $\rho_{\ell k}^{t}$, $k \in \mathcal{J}_\ell, \ell \in \mathcal{I} \setminus \{i\}$. To ensure fairness, we impose a constraint that in each time slot, the throughput of every user must not fall below a certain threshold. This threshold is defined as the total throughput of the user's serving cell divided by the number of users in the cell. Thus, we have the following constraint:
\begin{align}
    &\rho_{ij}^t B \log\left[1 \!+\! \text{SINR}_{ij}^{t}(\mathbf{x}_{ij}^t, \Phi_{ij}^{t})\right] \ge \frac{D_i^t}{|\mathcal{J}_i|}, j \in \mathcal{J}_i, i \in \mathcal{I}, t \in \mathcal{T},
\end{align}
where $\rho_{ij}^t$ is the the proportion of time-frequency resources allocated to user $j$ in cell $i$ in time slot $t$.

We use $\rho_{i}^t$ and $\Psi^{t}_i$ to represent the level (or proportion) of time-frequency resources used at PCBS $i$ and the BBU chip temperature of PCBS $i$ in time slot $t$, respectively. As mentioned earlier, our problem is of online decision-making. For the sake of presentation, however, let us for a moment assume that the heat dissipation coefficients of all time slots are known in advance.  The problem can then be formulated as follows.
\begin{subequations} \label{formulation:T}
    \begin{align}
        \max \quad & \sum_{t \in \mathcal{T}} \sum_{i \in \mathcal{I}} D_i^t \label{obj}\\
        \text{o.v.} \ \quad & D_i^t \in [0, D^{\max}], \mathbf{x}_{ij}^t, \rho_{ij}^t, \rho_{i}^t, \Psi^{t}_i,   i \in \mathcal{I}, j \in \mathcal{J}_i, t \in \mathcal{T} \notag\\
        \text{s.t.} \ \quad & \Psi^{t+1}_i \ge \Psi^{t}_i \!+\! \lambda_i \delta \!\left[\!\mu_i D_i^t \!+\! \alpha_i {\rm e}^{\beta_i \Psi^{t}_i} \!+\! \gamma_i \!-\! \sigma^t_i (\Psi^{t}_i \!-\! \hat{\Psi}^t_i) \!\right]\!, \notag \\ 
        &\qquad\qquad\qquad\qquad\qquad\qquad   i \in \mathcal{I}, t \in \mathcal{T} \setminus \{T\} \label{T_C1}\\
        & \hat{\Psi}^t_i \le \Psi^{t}_i \le \bar{\Psi} ,   i \in \mathcal{I}, t \in \mathcal{T} \label{T_C2}\\
        & \rho_{ij}^t B \log\left[1+ \text{SINR}_{ij}^{t}(\mathbf{x}_{ij}^t, \Phi_{ij}^{t})\right] \ge \frac{D_i^t}{|\mathcal{J}_i|}, \notag \\ 
        &\qquad \qquad \qquad \qquad \qquad \quad   j \in \mathcal{J}_i, i \in \mathcal{I}, t \in \mathcal{T} \label{LC_C1}\\
        & \text{Tr}(\mathbf{x}_{ij}^t{\mathbf{x}_{ij}^t}^{\rm T}) = N_T,   j \in \mathcal{J}_i,   i \in \mathcal{I} \\
        & \rho_i^t = \sum_{j\in \mathcal{J}_i}\rho_{ij}^t \le \bar{\rho},   i \in \mathcal{I}, t \in \mathcal{T}\label{LC_C2}
    \end{align}
\end{subequations}

The objective \eqref{obj} aims to maximize the system's total throughput over all time slots. Constraints \eqref{T_C1} and \eqref{T_C2} ensure that the temperature evolution complies with passive cooling dynamics and stays beneath the upper limit. Moreover, constraint \eqref{LC_C2} states that resulting time-frequency resource consumption, induced by the nonlinear system \eqref{LC_C1}, adheres to the resource limit $\bar{\rho}$ in all cells.


\section{Algorithm Design: Overview}

The problem in \eqref{formulation:T} presents challenges due to the cells' interdependence in load and the non-convex nature of \eqref{LC_C1}. Moreover, for online decision-making, we do not have the parameter values of all time slots in order to have \eqref{formulation:T} in its complete form. For these reasons, we apply the reinforcement learning (RL) method for problen-solving.

The goal of the algorithm design of RL is to provide a policy enabling to schedule throughput effectively across various states by maximizing the long-term cumulative reward. To this end, we model the multi-cell system as an agent, and the action of the agent consists of the variables to be optimized.  The fundamental components of this agent are as follows.

\subsection{State and Action}\label{subsec:State}

Based on \eqref{T_C1}-\eqref{T_C2}, for a given time slot $t$, the agent needs access the current system specifics, which encompass \begin{itemize}
    \item Ambient temperature, $\hat{\Psi}^t_i,   i \in \mathcal{I}$,
    \item Chip temperature, ${\Psi}^t_i,   i \in \mathcal{I}$,
    \item Optionally, the heat dissipation efficiency, $\sigma_i^t,   i \in \mathcal{I}$.
\end{itemize}

Note that if a time slot is short, then it is reasonable to assume that the heat dissipation efficiency at the very beginning of a time slot remains for the entire slot duration; otherwise, we may not assume the knowledge of heat dissipation for the current time slot. Therefore we consider two scenarios:
\begin{enumerate}
    \item Informed-Heat-Dissipation (IHD) Scenario: In this case, PCBS knows the heat dissipation efficiency for the current time slot. Thus, the state $s_t$ in the time slot $t$ is expressed by
    \begin{align}
    s_t = \left[\left(\hat{\Psi}^t_i, {\Psi}^t_i, \sigma_i^t\right),   i \in \mathcal{I}\right].
    \end{align}
    \item Uninformed-Heat-Dissipation (UHD) Scenario: In contrast, the heat dissipation efficiency is not known, so we utilize $s_i'$ to delineate its state as:
    \begin{align}
    s_t' = \left[\left(\hat{\Psi}^t_i, {\Psi}^t_i \right),   i \in \mathcal{I}\right].
    \end{align}
\end{enumerate}
With state $s_t$ in time slot $t$, the agent takes an action in action space. The action $a_t$ is defined as
\begin{align}
    a_t =  \left[D_i^t,   i \in \mathcal{I}\right].
\end{align}

\subsection{Reward Mechanism}\label{subsec:Reward}

We employ soft actor-critic (SAC) to refine policies with the aim of maximizing rewards in the environment \cite{haarnoja2018soft}. SAC finds the maximum of the expected sum of rewards and takes the expected entropy objective to adopt stochastic policies. The maximum entropy objective function is defined as:
\begin{align}
    J(\pi) = \sum_{t=0}^{T} \mathbb{E}_{(s_t,a_t) \sim \rho_\pi} [R(s_t, a_t) + k H(\pi( \cdot | s_t))],
\end{align}
where $R(s_t, a_t)$ is the reward function, $k$ is a factor to determine the importance of entropy relative to the reward, and $\pi(\cdot | s_t)$ is the probability distribution of subsequent actions given the state $s_t$, typically resorting to a Gaussian distribution. Additionally, $H(\pi(\cdot | s_t))$ embodies the entropy component, represented as $H(\pi(\cdot | s_i)) \triangleq \mathbb{E}_a[-\log(\pi(a | s_t))]$. It is important to highlight that the agent actively tries various actions to maximize the target entropy. This strategy improves the exploratory feature of SAC.

Our design of the reward mechanism is comprised of two elements.
\begin{enumerate}
    \item Design for resource constraint \eqref{LC_C2}: Should the agent initiate an action to set the throughput levels such that the resulting load in any cell exceeds resource availability, the action will not be executed, leading to a reward of zero for the agent. Conversely, if the action is within the resource limit, the agent may be given a positive reward.
    \item Design for temperature constraint \eqref{T_C2}: In the event that an action proposed by the agent leads to the estimated CPU temperature exceeding the maximum allowed threshold, the action will be rejected, and the agent will receive no reward. On the other hand, actions that maintain the BBU temperature within safe limits will be rewarded with a positive value. We will present a refinement of this basic design later on.
\end{enumerate}


\section{Reward Design for Resource Limit}

\subsection{Preliminaries}
In the context of RL, the resource constraint \eqref{LC_C2} boils down to the following: For an action $\left\{D_i^t|   i \in \mathcal{I}\right\}$, can these throughput levels be met within the resource limit?  This amounts to the following min-max problem. For simplicity, the time slot index is omitted.
\begin{subequations}\label{multi-cell problem}
    \begin{align}
        \min \quad & \hat{\rho} \label{sing-cell problem obj} \\
        \text{o.v.} \quad & \hat{\rho}, \mathbf{x}_{ij}, \rho_{ij},    i \in \mathcal{I}, j \in \mathcal{J}_i \notag\\
        \text{s.t.} \quad & \rho_{ij} B \log\left[ 1\!+\!\text{SINR}_{ij} (\mathbf{x}_{ij}, \Phi_{ij})\right] \ge \frac{D_i}{|\mathcal{J}_i|},   j \in \mathcal{J}_i,   i \in \mathcal{I} \label{multi-cell problem c1} \\
        & \text{Tr}(\mathbf{x}_{ij}\mathbf{x}_{ij}^{\rm T}) = N_T,   j \in \mathcal{J}_i,   i \in \mathcal{I}\label{multi-cell problem c2}\\
        & \rho_i = \sum_{j\in \mathcal{J}_i}\rho_{ij} \le \hat{\rho} ,   i \in \mathcal{I} \label{multicelllast} 
    \end{align}
\end{subequations}

In this problem, $\hat{\rho}$ is an auxiliary variable, and by \eqref{multicelllast}, it takes the value of maximum load over all cells. After solving the problem, if the optimal $\hat{\rho}$ is less than or equal to $\bar{\rho}$, action $\left\{D_i^t|   i \in \mathcal{I}\right\}$ is feasible.

The multi-cell problem \eqref{multi-cell problem} concerns a single time slot. However, we have to deal with the cells' interdependence in interference and the resulting non-convexity.  To proceed, we consider local optimization of the symbol vector within each cell, that is, each cell finds the solution that maximizes the SINR of each of its users (even though the solution does not mean minimum interference to other cell users). This is a reasonable trade-off, because the solution is locally optimal within each cell, and maximum SINR translates into minimum resource consumption, which reduces interference to other cells.  Moreover, the computation becomes confined within each individual cell.

Consider cell $i$ and user $j \in \mathcal{J}_i$. To maximize the SINR by optimizing $\mathbf{x}_{ij}$, the problem is formulated as
\begin{subequations}\label{user problem}
    \begin{align}
        \max \quad & \|\mathbf{H}_{ij}\mathbf{x}_{ij}\|^2 \\
        \text{o.v.} \quad & \mathbf{x}_{ij} \notag \\
        \text{s.t.} \quad & \text{Tr}(\mathbf{x}_{ij}\mathbf{x}_{ij}^{\rm T}) = N_T
    \end{align}
\end{subequations}
To solve \eqref{user problem}, we use singular value decomposition (SVD) to deal with the channel matrix, i.e., $\mathbf{H}_{ij} = \mathbf{U}_{ij} \mathbf{\Sigma}_{ij} \mathbf{V}^{\rm T}_{ij}$, creating multiple independent virtual SISO subchannels. Subsequently, we employ the well-known water-filling algorithm \cite{knopp1995information} for $\mathbf{\Sigma}_{ij}$ to determine the optimal signal vector $\mathbf{x}_{ij}^*$. Then, the PCBS uses right singular matrix $\mathbf{V}_{ij}$ for precoding, transforming the signal into the subchannels. The user uses left singular matrix $\mathbf{U}_{ij}$ for post-processing, decoupling the received signal into independent subchannel signals.

\subsection{Multi-cell Optimization and Reward Design}\label{sec:noar}

Having computed $\mathbf{x}^*_{ij}, j \in \mathcal{J}_i, i \in \mathcal{I}$, the problem becomes more tractable. More specifically, the interference term $\Phi_{ij}$ takes the following form
\begin{align}\label{ninterference}
    \Phi_{ij} = \frac{E}{N_T}\sum_{\ell \in \mathcal{I} \setminus \{i\}} \sum_{k\in \mathcal{J}_\ell} \rho_{\ell k} h_{\ell kj},
\end{align}
where $h_{\ell kj} = ||\mathbf{H}_{\ell k} \mathbf{x}^*_{\ell k}||^2$. In addition, we define $h_{ij} = ||\mathbf{H}_{ij} \mathbf{x}^*_{ij}||^2$. Plugging in the above into \eqref{SINR}, and let $N_c = \frac{BN_0 N_T}{E}$, the problem reads
\begin{subequations}\label{mproblem}
    \begin{align}
        \min \quad & \hat{\rho} \label{mproblemobj} \\
        \text{o.v.} \quad & \hat{\rho}, \rho_{ij},    i \in \mathcal{I}, j \in \mathcal{J}_i \notag\\
        \text{s.t.} \quad &  \rho_{ij} B \log\left[ 1+ \frac{h_{ij}}{ \sum\limits_{\ell \in \mathcal{I} \setminus \{i\}} \sum\limits_{k \in \mathcal{J}_\ell} \rho_{\ell k} h_{\ell k j} + N_c} \right]\geq \frac{D_i}{|\mathcal{J}_i|}, \nonumber \\
        & \qquad \qquad \quad \qquad \qquad \qquad \qquad   j \in \mathcal{J}_i,   i \in \mathcal{I} \label{mproblemsinr} \\
        & \rho_i = \sum_{j\in \mathcal{J}_i}\rho_{ij} \le \hat{\rho} ,   i \in \mathcal{I} \label{mproblemlast} 
    \end{align}
\end{subequations}

\begin{lemma}
\label{lemma:opt}
There exists at least one optimal solution of \eqref{mproblem}, such that
\eqref{mproblemsinr} holds as equality for all users.
\end{lemma}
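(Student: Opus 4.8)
The plan is to exploit the monotone (interference-function) structure of the load-coupling system \eqref{mproblem}. The first step is to rewrite constraint \eqref{mproblemsinr} in fixed-point form. For a non-negative load vector $\bm{\rho}=(\rho_{\ell k})_{\ell\in\mathcal{I},\,k\in\mathcal{J}_\ell}$ and each user $(i,j)$, let
\[
  f_{ij}(\bm{\rho})=\frac{D_i/|\mathcal{J}_i|}{B\log\!\left[1+\dfrac{h_{ij}}{\sum_{\ell\in\mathcal{I}\setminus\{i\}}\sum_{k\in\mathcal{J}_\ell}\rho_{\ell k}h_{\ell kj}+N_c}\right]}
\]
be the smallest load user $(i,j)$ needs in order to carry throughput $D_i/|\mathcal{J}_i|$ against the interference that $\bm{\rho}$ generates; then \eqref{mproblemsinr} is equivalent to $\rho_{ij}\ge f_{ij}(\bm{\rho})$, holding with equality exactly when the constraint is tight. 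Two elementary facts about $f=(f_{ij})$ will do all the work: $f_{ij}$ is non-decreasing in every coordinate $\rho_{\ell k}$ with $\ell\neq i$ and does not depend on the loads within cell $i$; and $f$ is continuous (assuming $h_{ij}>0$). I expect verifying these to be a one-line inspection of the formula.

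Next, I would start from an arbitrary optimal solution $\bm{\rho}$ of \eqref{mproblem}, so that $\hat{\rho}=\max_{i\in\mathcal{I}}\sum_{j\in\mathcal{J}_i}\rho_{ij}$ is the optimal value, and run the iteration $\bm{\rho}^{(0)}=\bm{\rho}$, $\bm{\rho}^{(n+1)}=f(\bm{\rho}^{(n)})$. Feasibility of $\bm{\rho}^{(0)}$ gives $\bm{\rho}^{(1)}=f(\bm{\rho}^{(0)})\le\bm{\rho}^{(0)}$ componentwise; monotonicity of $f$ then propagates this inductively, so the sequence is non-increasing. Being bounded below by $\mathbf{0}$, it converges to some $\bm{\rho}^\star$, and continuity of $f$ forces $\bm{\rho}^\star=f(\bm{\rho}^\star)$ --- that is, \eqref{mproblemsinr} holds with equality for every user at $\bm{\rho}^\star$.

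Finally I would verify that $\bm{\rho}^\star$, together with $\hat{\rho}^\star=\max_{i\in\mathcal{I}}\sum_{j\in\mathcal{J}_i}\rho^\star_{ij}$, is again optimal: \eqref{mproblemsinr} holds by the fixed-point identity, and since $\bm{\rho}^\star\le\bm{\rho}$ we have $\sum_{j\in\mathcal{J}_i}\rho^\star_{ij}\le\hat{\rho}$ for all $i$, so \eqref{mproblemlast} is satisfied and $\hat{\rho}^\star\le\hat{\rho}$; as $\hat{\rho}$ is already the minimum of \eqref{mproblemobj}, we get $\hat{\rho}^\star=\hat{\rho}$. The one point needing care --- the ``hard part'' --- is arguing that shrinking the slack loads cannot backfire, and this is exactly where monotonicity enters: lowering any $\rho_{ij}$ can only reduce the interference seen by users in other cells, hence can only relax their rate constraints, never violate them, and it can only lower the per-cell loads, never raise $\hat{\rho}$. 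It is also worth dispatching the degenerate case $D_i=0$ separately (then $\rho_{ij}=0$ already satisfies \eqref{mproblemsinr} with equality) and assuming $h_{ij}>0$ so that $f_{ij}$ is well defined.
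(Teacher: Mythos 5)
Your proof is correct, but it takes a genuinely different route from the paper's. The paper argues by local perturbation: given an optimal $\boldsymbol{\rho}^*$ with a slack constraint for some user $(i,j)$, it reduces $\rho^*_{ij}$ to tightness and observes that this only lowers the interference to users in other cells, so no constraint is violated and the objective does not increase. You instead invoke the monotone structure of the map $f=(f_{ij})$ (non-decreasing in out-of-cell loads, independent of in-cell loads, continuous for $h_{ij}>0$), seed the fixed-point iteration at an arbitrary optimal point, and use the resulting non-increasing, bounded sequence to land on a fixed point $\boldsymbol{\rho}^\star=f(\boldsymbol{\rho}^\star)$ that remains feasible and optimal. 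Your version is actually the more watertight of the two: the paper's one-user-at-a-time tightening has the subtlety that making user $(i,j)$'s constraint tight reduces interference to users in other cells, so their previously tight constraints become slack again, and a finite sequential pass never achieves simultaneous equality for all users; the iteration-to-the-fixed-point argument is precisely what closes that gap. The price you pay is that your proof front-loads the monotonicity machinery that the paper only develops afterwards (in Lemma~\ref{lemma:sif} and Theorem~\ref{theorem:sif} via the SIF framework), whereas the paper's Lemma~\ref{lemma:opt} is meant to be an elementary standalone observation. Your handling of the edge cases ($D_i=0$, $h_{ij}>0$) and the verification that $\hat{\rho}^\star=\hat{\rho}$ are both sound.
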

\begin{proof}
Denote by $\mathbf{\rho}^*$ an optimal solution, and suppose for this solution
\eqref{mproblemsinr} holds as strict inequality for user $j \in \mathcal{J}_i$. 
Clearly, we can reduce $\rho^*_{ij}$ slightly without violating
\eqref{mproblemsinr} for this user.  Moreover, a smaller value of
$\rho_{ij}$ implies less interference (i.e., smaller value of the
SINR denominator for any user in other cells), and hence
\eqref{mproblemsinr} remains to hold for all the other users, and the
lemma follows.
\end{proof}

Define $\boldsymbol{\rho}_{-ij} = [\rho_{\ell k} | \ell \in
\mathcal{J} \setminus \{i\}, k \in \mathcal{J}_\ell]$.  This is the vector
of all user-level load (i.e., resource consumption) for meeting user
throughput in cells other than $i$.  Moreover, let
$f_{ij}(\boldsymbol{\rho}_{-ij})$ denote the value of $\rho_{ij}$ as
function of $\boldsymbol{\rho}_{-ij}$, that is,
\begin{align}\label{eq:rhofunction}
f_{ij}(\boldsymbol{\rho}_{-ij}) = \frac{D_i}{B |\mathcal{J}_i| \log \left[ 1+ \frac{h_{ij}}{ \sum\limits_{\ell \in \mathcal{I} \setminus \{i\}} \sum\limits_{k \in \mathcal{J}_\ell} \rho_{\ell k} h_{\ell k j} + N_c} \right]}.
\end{align}

By Lemma~\ref{lemma:opt}, what we are seeking is the solution to the following system.
\begin{align}\label{eq:rhosystem}
    \rho_{ij} = f_{ij}(\boldsymbol{\rho}_{-ij})
\end{align}

\begin{lemma}\label{lemma:sif}
    Function $f_{ij}(\boldsymbol{\rho}_{-ij}), j \in \mathcal{J}_i, i \in \mathcal{I}$ is a standard interference function (SIF).
\end{lemma}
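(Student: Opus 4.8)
The plan is to verify directly the three defining axioms of a standard interference function in the sense of Yates, namely \emph{positivity}, \emph{monotonicity}, and \emph{scalability}, for the map $\boldsymbol{\rho}_{-ij} \mapsto f_{ij}(\boldsymbol{\rho}_{-ij})$ on the nonnegative orthant. Throughout I would abbreviate $C_{ij} = \frac{D_i}{B|\mathcal{J}_i|} > 0$ and $S_{ij}(\boldsymbol{\rho}_{-ij}) = \sum_{\ell \in \mathcal{I}\setminus\{i\}}\sum_{k \in \mathcal{J}_\ell} \rho_{\ell k} h_{\ell k j} \ge 0$, so that $f_{ij} = C_{ij}\big/\log\!\big(1 + \tfrac{h_{ij}}{S_{ij} + N_c}\big)$, keeping in mind $h_{ij} > 0$ and $N_c > 0$. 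Positivity is then immediate: since $S_{ij} \ge 0$ and $h_{ij}, N_c > 0$, the argument $1 + h_{ij}/(S_{ij}+N_c)$ is strictly greater than $1$, its logarithm is strictly positive, and hence $f_{ij} > 0$ (the degenerate case $D_i = 0$ being trivial).

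For monotonicity, if $\boldsymbol{\rho}_{-ij} \ge \boldsymbol{\rho}'_{-ij}$ componentwise then, because every $h_{\ell k j} \ge 0$, we get $S_{ij}(\boldsymbol{\rho}_{-ij}) \ge S_{ij}(\boldsymbol{\rho}'_{-ij})$; the map $S \mapsto \log\!\big(1 + h_{ij}/(S+N_c)\big)$ is nonincreasing in $S$, and $f_{ij}$ is the positive constant $C_{ij}$ divided by this quantity, so $f_{ij}$ is nondecreasing. It remains to handle scalability, which I expect to be the only step requiring real work: for $\alpha > 1$ I must show $\alpha f_{ij}(\boldsymbol{\rho}_{-ij}) > f_{ij}(\alpha \boldsymbol{\rho}_{-ij})$, and since both logarithms are positive this is equivalent to $\alpha \log\!\big(1 + \tfrac{h_{ij}}{\alpha S_{ij} + N_c}\big) > \log\!\big(1 + \tfrac{h_{ij}}{S_{ij} + N_c}\big)$.

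I would establish this in two bites. First, because $N_c > 0$ and $\alpha > 1$ we have $\alpha S_{ij} + N_c < \alpha(S_{ij} + N_c)$, hence $\log\!\big(1 + \tfrac{h_{ij}}{\alpha S_{ij}+N_c}\big) > \log\!\big(1 + \tfrac{h_{ij}}{\alpha(S_{ij}+N_c)}\big)$; multiplying through by $\alpha$ reduces the target to $\alpha \log(1 + u/\alpha) > \log(1+u)$ with $u = h_{ij}/(S_{ij}+N_c) > 0$. Second, this one-variable inequality follows by showing $\phi(\alpha) := \alpha \log(1 + u/\alpha)$ is strictly increasing for $\alpha \ge 1$: a short computation gives $\phi'(\alpha) = \log(1 + u/\alpha) - \tfrac{u/\alpha}{1+u/\alpha}$, which is positive because $\log(1+v) > \tfrac{v}{1+v}$ for all $v>0$ (equivalently $(1+v)\log(1+v) > v$, as the difference vanishes at $v=0$ and has derivative $\log(1+v)>0$). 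Thus $\phi(\alpha) > \phi(1) = \log(1+u)$, and chaining the two inequalities yields scalability; together with positivity and monotonicity this proves that $f_{ij}$ is an SIF.

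The one mild subtlety worth flagging is the role of the noise floor $N_c > 0$: it is exactly what makes $\alpha S_{ij} + N_c < \alpha(S_{ij}+N_c)$ a \emph{strict} inequality even when $S_{ij} = 0$, and hence what guarantees the strict scalability the SIF definition demands (positivity and monotonicity would survive $N_c = 0$, but strict scalability could fail). Once Lemma~\ref{lemma:sif} is in hand, the standard Yates fixed-point theory for standard interference functions applies to the system \eqref{eq:rhosystem}, delivering existence and uniqueness of the load vector and convergence of the natural fixed-point iteration $\rho_{ij} \leftarrow f_{ij}(\boldsymbol{\rho}_{-ij})$, which is what underpins the feasibility/reward test of Section~\ref{sec:noar}.
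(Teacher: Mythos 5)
Your proof is correct and follows essentially the same route as the paper: verify the Yates axioms directly, with scalability reduced (after separating out the effect of the noise term $N_c>0$) to the inequality $\alpha\log(1+u/\alpha)>\log(1+u)$, which is exactly the logarithmic form of the generalized Bernoulli inequality the paper invokes — you simply prove it from scratch via the derivative of $\alpha\mapsto\alpha\log(1+u/\alpha)$ instead of citing it. Your explicit treatment of positivity and your remark that $N_c>0$ is what makes scalability strict are welcome additions the paper leaves implicit.
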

\begin{proof}
    See Appendix~\ref{sec:proofsif}.
\end{proof}

\begin{theorem}\label{theorem:sif}
    Let $\rho_{ij}^{(0)} \geq 0~ (j \in \mathcal{J}_i, i \in \mathcal{I})$ be an arbitrary non-negative initial point. Then, the fixed-point method, where $\rho_{ij}^{(\tau+1)} = f_{ij}(\boldsymbol{\rho}_{-ij}^{(\tau)}), \tau = 0,1 ,\dots$, converges to a unique point that is optimum of \eqref{mproblem}. 
\end{theorem}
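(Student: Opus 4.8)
The plan is to invoke the classical theory of standard interference functions (SIFs) due to Yates. Once Lemma~\ref{lemma:sif} establishes that each $f_{ij}$ is an SIF, the mapping $F(\boldsymbol{\rho}) = [f_{ij}(\boldsymbol{\rho}_{-ij})]_{j\in\mathcal{J}_i, i\in\mathcal{I}}$ on the nonnegative orthant is itself a standard interference function (the SIF property is preserved under stacking into a vector map, since positivity, monotonicity, and scalability all hold componentwise). I would first recall the statement of Yates' theorem: if $F$ is an SIF and the system $\boldsymbol{\rho} = F(\boldsymbol{\rho})$ has a feasible (i.e., finite nonnegative) fixed point, then that fixed point is unique, and the fixed-point iteration $\boldsymbol{\rho}^{(\tau+1)} = F(\boldsymbol{\rho}^{(\tau)})$ converges to it from \emph{any} nonnegative starting point. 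Thus the bulk of the argument is to (i) cite/state Yates' result precisely, (ii) verify its hypotheses in our setting, and (iii) connect the fixed point back to optimality of \eqref{mproblem}.

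For step (ii), the one nontrivial hypothesis is existence of a feasible fixed point. I would argue this as follows: each $f_{ij}(\boldsymbol{\rho}_{-ij})$ is bounded above by the value obtained when every interfering load $\rho_{\ell k}$ is set to its maximum meaningful value (at most $1$, or more conservatively $|\mathcal{J}_\ell|$ if one does not pre-impose the cap), which yields a finite upper bound $\bar{f}_{ij} = D_i / (B|\mathcal{J}_i| \log(1 + h_{ij}/(\sum_{\ell\neq i}\sum_k h_{\ell k j} + N_c)))$ since $N_c > 0$ keeps the logarithm argument strictly above $1$ and hence the denominator strictly positive. Monotonicity of $F$ then lets me run a standard sandwich/monotone-convergence argument: starting the iteration from $\boldsymbol{\rho}^{(0)} = \mathbf{0}$ gives a nondecreasing sequence bounded above by the vector $\bar{\boldsymbol{f}}$, which therefore converges to a finite fixed point. (Alternatively one can cite that an SIF has a fixed point iff the iteration from $\mathbf{0}$ stays bounded, and the explicit bound above discharges that condition.) This establishes existence; uniqueness and global convergence are then immediate from Yates' theorem.

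For step (iii), I would observe that by Lemma~\ref{lemma:opt} there is an optimal solution of \eqref{mproblem} in which \eqref{mproblemsinr} holds with equality for every user, i.e., an optimal solution that is a fixed point of $F$; by uniqueness of the fixed point, \emph{every} optimal solution coincides with this one, and conversely the unique fixed point, being feasible for \eqref{mproblem} with all constraints \eqref{mproblemsinr} tight, is optimal — indeed any feasible point of \eqref{mproblem} dominates (componentwise) some solution of the equality system by the same slack-reduction argument as in Lemma~\ref{lemma:opt}, so it cannot have smaller $\hat{\rho} = \max_i \sum_{j} \rho_{ij}$. Hence the limit of the iteration is exactly the optimum of \eqref{mproblem}.

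The main obstacle is the feasibility/existence step: unlike uniqueness and convergence, which are ``for free'' once the SIF property is in hand, existence of a finite fixed point genuinely uses the structure of $f_{ij}$ — specifically the strict positivity of $N_c$ and the boundedness of the interfering loads — and must be argued rather than cited. A secondary, more cosmetic subtlety is making the reduction in (iii) airtight: one should be careful that reducing a user's load to restore tightness in \eqref{mproblemsinr} only \emph{decreases} interference to others and hence cannot increase any $\rho_{\ell k}$ or the objective $\hat{\rho}$, which is precisely the monotonicity already exploited in Lemma~\ref{lemma:opt}, so no new work is needed there beyond a careful statement.
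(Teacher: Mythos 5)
Your overall route is the same as the paper's: Lemma~\ref{lemma:sif} supplies the SIF property, Yates' Theorems~1 and~2 supply uniqueness of the fixed point and global convergence of the iteration from any nonnegative start, and Lemma~\ref{lemma:opt} ties the fixed point to optimality of \eqref{mproblem}. The paper's proof is exactly this two-line citation, so your steps (i) and (iii) match it; your step (iii) is simply a more careful spelling-out of what the paper leaves implicit, and it is fine.

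Where you go beyond the paper is step (ii), existence of a feasible fixed point, and you are right that this is the one hypothesis of Yates' framework that does not come for free. However, your argument for it does not close. The function $f_{ij}$ in \eqref{eq:rhofunction} is \emph{not} bounded above on the nonnegative orthant: as the interfering loads $\rho_{\ell k}\to\infty$, the logarithm tends to $\log 1=0$ and $f_{ij}\to\infty$. Your bound $\bar f_{ij}$ is obtained by substituting $\rho_{\ell k}\le 1$ (or $\le |\mathcal{J}_\ell|$), but nothing guarantees the iterates stay in that box, so the sandwich argument starting from $\mathbf{0}$ is not dominated by $\bar{\boldsymbol f}$. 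The monotone sequence from $\mathbf{0}$ is bounded if and only if there exists some $\bar{\boldsymbol\rho}\ge\mathbf{0}$ with $F(\bar{\boldsymbol\rho})\le\bar{\boldsymbol\rho}$ componentwise, and whether such a point exists genuinely depends on the demands: for large enough $D_i$ the system \eqref{eq:rhosystem} has no finite solution and the iteration diverges. So existence cannot be discharged unconditionally; it is a feasibility assumption on the action $\{D_i\}$ that both you and the paper need (the paper makes it implicitly, and its reward mechanism in effect tests it by checking $\hat\rho\le\bar\rho$). If you state that assumption explicitly, the remainder of your proof is sound and coincides with the paper's.
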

\begin{proof}
    The fixed-point iterations converge to a unique point follows from the basic properties of SIF~\cite[Theorems 1 and 2]{yates1995framework}. This together with Lemma~\ref{lemma:opt} establish the theorem.
\end{proof}

\begin{algorithm}[tbp]
\caption{Fixed point method} \label{al:fixed_point}
\KwIn{ $\{D_i, \mathbf{x}_{ij}^*,   i \in \mathcal{I},   j \!\in \mathcal{J}\}, \varepsilon$;} 
\KwOut{$\hat{\rho}$;}   
Initialize $\boldsymbol{\rho}^{(0)}$ \label{step1}\;
$\tau \leftarrow 0$\;
\Repeat
{$||\boldsymbol{\rho}^{(\tau)}- \boldsymbol{\rho}^{(\tau-1)}||_{\infty} \leq \varepsilon$
}
{
$\tau \leftarrow \tau+1$\;
\For{{\rm cell} $i$, $  i \in \mathcal{I}$\label{parallel1}}{
    \For{{\rm user} $i$, $  j \in \mathcal{J}_i$}{
        $\rho_{ij}^{\tau} \leftarrow f_{ij}(\boldsymbol{\rho}_{-ij}^{(\tau-1)})$ 
    }
}
}
$\hat{\rho} = \max_{i \in \mathcal{I}} \sum_{j \in \mathcal{J}_i} \rho_{ij}^{(\tau)}$\;
\Return {$\hat{\rho}$}\;
\end{algorithm}

The fixed-point method for solving \eqref{mproblem} is given in Algorithm~\ref{al:fixed_point}. It is noted that when using \eqref{eq:rhofunction} to make updates, the complexity of interference computation is proportional to the number of users of the entire system. The complexity is greatly reduced if the interference computation involves cell-level information only. To this end, we present two interference approximations.

\begin{enumerate}
    \item Long range approximation: Assuming significant inter-cell distance, resulting in low diversity in the MIMO channel, $\mathbf{H}_{\ell j}$ is approximated as a rank-1 matrix.  Interference from cell $\ell$ at user $j$ in cell $i$ is approximated by $\rho_{\ell} E |g_{\ell j}|^2$, where $\rho_{\ell}$ is the cell-level load of $\ell$ and $g_{\ell j}$ is the singular value of the MIMO channel obtained through SVD, with $\mathbf{H}_{\ell j} = \mathbf{u} g_{\ell j} \mathbf{v}^{\rm T}$. Here, $\mathbf{u}$ and $\mathbf{v}$ are unitary orthogonal vectors representing the receiving and transmitting spaces, respectively.
    \item Upper bound approximation: For $j \in \mathcal{J}_i$, an upper bound of interference from cell $\ell$ can be obtained by the SVD and water-filling algorithm applied to \eqref{upperboundappro}.
    \begin{subequations} \label{upperboundappro} \begin{align} \max \quad
    & \|\mathbf{H}_{\ell j}\mathbf{x}_{\ell j}\|^2 \\ \text{o.v.} \quad &
    \mathbf{x}_{\ell j} \notag \\ \text{s.t.} \quad &
    \text{Tr}(\mathbf{x}_{\ell j}\mathbf{x}_{\ell j}^{\rm T}) = N_T \end{align}
    \end{subequations} 
    
    Let $\|\mathbf{H}_{\ell j}\mathbf{x}_{\ell j}^\bigstar \|^2$ be the maximum value of \eqref{upperboundappro}. Clearly,     $\rho_\ell \frac{E}{N_T} \|\mathbf{H}_{lj}\mathbf{x}_{\ell j}^\bigstar \|^2$ is an upper bound of interference. Therefore, $\sum_{\ell \in \mathcal{I}, \ell \neq i} \rho_{\ell} \frac{E}{N_T} \|\mathbf{H}_{\ell j}\mathbf{x}^\bigstar _{\ell j}\|^2$    represents an upper bound of the total interference received by user $j$.
\end{enumerate}
By either of the approximations, the interference experienced by user $j \in \mathcal{J}_i$ due to another cell $\ell$ is a function in the cell-level load $\rho_\ell$. Letting $\boldsymbol{\rho}_{-i} = [\rho_1, \rho_2,...,\rho_{i-1}, \rho_{i+1},...,\rho_I]$, the resulting system equation is $\rho_i = f_i(\boldsymbol{\rho}_{-i}) = \sum_{j \in \mathcal{J}_i} f_{ij}(\boldsymbol{\rho}_{-i}), i \in \mathcal{I}$.

\begin{theorem}
    Function $f_i(\boldsymbol{\rho}_{-i})$ for either of the approximations is an SIF.
\end{theorem}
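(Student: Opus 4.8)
The plan is to reduce the claim to Lemma~\ref{lemma:sif} together with the elementary fact, immediate from the definition, that a finite sum of standard interference functions is again a standard interference function. Recall that a function $f$ of a nonnegative load vector is an SIF if it is (i) positive, (ii) monotone (nondecreasing in every component), and (iii) scalable, i.e.\ $\alpha f(\boldsymbol{\rho}) > f(\alpha \boldsymbol{\rho})$ for all $\alpha > 1$. Each of these three properties is preserved under summation, so since $f_i(\boldsymbol{\rho}_{-i}) = \sum_{j \in \mathcal{J}_i} f_{ij}(\boldsymbol{\rho}_{-i})$, it suffices to show that every term $f_{ij}(\boldsymbol{\rho}_{-i})$ is an SIF in the cell-level load vector $\boldsymbol{\rho}_{-i}$.

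The next step is to observe that under either approximation $f_{ij}$ has exactly the same algebraic form as in \eqref{eq:rhofunction}, namely
\begin{align*}
f_{ij}(\boldsymbol{\rho}_{-i}) = \frac{D_i}{B|\mathcal{J}_i|\,\log\!\bigl(1 + h_{ij}/(\Phi_{ij}(\boldsymbol{\rho}_{-i}) + N_c)\bigr)},
\end{align*}
the sole difference being that the interference term now depends on the cell-level loads: $\Phi_{ij}(\boldsymbol{\rho}_{-i}) = \sum_{\ell \in \mathcal{I}\setminus\{i\}} \rho_\ell\, E|g_{\ell j}|^2$ for the long-range approximation, and $\Phi_{ij}(\boldsymbol{\rho}_{-i}) = \sum_{\ell \in \mathcal{I}\setminus\{i\}} \rho_\ell\, \frac{E}{N_T}\|\mathbf{H}_{\ell j}\mathbf{x}^{\bigstar}_{\ell j}\|^2$ for the upper-bound approximation. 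In both cases $\Phi_{ij}$ is a linear function of $\boldsymbol{\rho}_{-i}$ with nonnegative constant coefficients and no constant term — precisely the structural feature that the proof of Lemma~\ref{lemma:sif} exploits for the user-level interference map $\sum_{\ell,k}\rho_{\ell k} h_{\ell k j}$. Consequently the verification of the three axioms transfers essentially verbatim: positivity holds because $N_c > 0$ forces the logarithm's argument above $1$; monotonicity holds because increasing any $\rho_\ell$ increases $\Phi_{ij}$ and hence decreases the logarithm; and scalability follows from the homogeneity $\Phi_{ij}(\alpha\boldsymbol{\rho}_{-i}) = \alpha\Phi_{ij}(\boldsymbol{\rho}_{-i})$ together with the fact that $t \mapsto 1/\log(1 + h_{ij}/(t+N_c))$ is strictly sub-homogeneous on $[0,\infty)$.

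The scalability axiom is the only nontrivial point, so I would isolate it as a small abstract lemma: if $g$ is positive, increasing and strictly sub-homogeneous on $[0,\infty)$ (i.e.\ $\alpha g(t) > g(\alpha t)$ for $\alpha > 1$) and $\Phi$ is a linear map with nonnegative coefficients and $\Phi(\mathbf{0}) = \mathbf{0}$, then $g \circ \Phi$ is scalable, and it is clearly positive and monotone as well. Instantiating with $g(t) = D_i/\bigl(B|\mathcal{J}_i|\log(1 + h_{ij}/(t+N_c))\bigr)$, sub-homogeneity of $g$ reduces to the elasticity estimate $t\,g'(t) < g(t)$, which in turn follows from the elementary inequality $\log(1+x) > x/(1+x)$ for $x>0$ with $x = h_{ij}/(t+N_c)$. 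Summing over $j \in \mathcal{J}_i$ then gives that $f_i$ is an SIF, which, by the same reasoning as in Theorem~\ref{theorem:sif} (Yates' framework \cite{yates1995framework}), additionally guarantees that the cell-level fixed-point iteration $\rho_i^{(\tau+1)} = f_i(\boldsymbol{\rho}_{-i}^{(\tau)})$ converges to the unique load vector consistent with the chosen approximation.
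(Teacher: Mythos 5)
Your proposal is correct and, at the top level, takes the same route as the paper: both reduce the claim to the observation that under either approximation the per-user load function keeps the algebraic form of \eqref{eq:rhofunction}, with the interference term replaced by a nonnegative, homogeneous linear function of the cell-level loads (cf.~\eqref{eq:cellrhofunction}), so the SIF axioms can be verified exactly as in Lemma~\ref{lemma:sif}. Two details differ. First, you make explicit the closure step that a finite sum of SIFs is again an SIF, so that $f_i=\sum_{j\in\mathcal{J}_i}f_{ij}$ inherits the property from its summands; the paper leaves this implicit. Second, for scalability you replace the paper's generalized Bernoulli inequality by an elasticity argument: with $g(t)=D_i/\bigl(B|\mathcal{J}_i|\log(1+h_{ij}/(t+N_c))\bigr)$, the condition $t\,g'(t)<g(t)$ is equivalent to $g(t)/t$ being strictly decreasing, which yields $\alpha g(t)>g(\alpha t)$ for $\alpha>1$; the condition itself reduces to $\tfrac{th}{(t+N_c)(t+N_c+h)}<\log\bigl(1+\tfrac{h}{t+N_c}\bigr)$, and this follows from $\log(1+x)>x/(1+x)$ together with $t/(t+N_c)<1$. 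This checks out and is a legitimate alternative: your version isolates a reusable criterion (a sub-homogeneous scalar map composed with a homogeneous linear interference map is scalable), whereas the paper's Bernoulli route is a more direct two-line inequality. The only shared, minor caveat is that strict positivity of $f_{ij}$ requires $D_i>0$, which neither argument states explicitly.
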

\begin{proof}
    See Appendix~\ref{sec:proofasif}.
\end{proof}

After obtaining $\hat{\rho}$ by fixed-point method for solving \eqref{mproblem} or any of the aproximations, we verify whether or not it is below the resource limit $\bar{\rho}$, i.e.,
\begin{align}\label{rho_condition}
    \hat{\rho} \le \bar{\rho}.
\end{align}
If the condition is met, the action, namely the throughput $D_i, i \in \mathcal{I}$, is admitted.  In this case, the reward with respect to cell $i$ equals $D_i$.  Otherwise, the action is denied and the reward becomes zero. That is,
\begin{align}\label{tem_reward}
    R_{\rho_i}(s_t, a_t) = \left\{
                        \begin{array}{lcl}
                        D_i &      & {\text{if }\eqref{rho_condition}\text{ holds}}\\
                        0   &      & {\text{otherwise}}
                        \end{array} \right.
\end{align}


\section{Reward Design for Temperature Limit}\label{sec:temp}

The exploration phase in RL comes with the risk of overheating. We design a reward system to make temperature stay within the limit while providing the agent with accurate feedback. The policy zone, denoted by $\Pi$, is defined as:
\begin{align}
    \Pi = \Big\{\pi | \Psi_i^t \!+\! \lambda_i \delta \mathbb{E}_{\sigma_i^t, D_i^t \sim \pi (\cdot| s_t )} \Big[ \mu_i D_i^t \!+\! \alpha_i {\rm e}^{\beta_i \Psi^{t}_i} \!+\! \gamma_i \! \notag\\
    -\! \sigma_i^t (\Psi_i^t - \hat{\Psi}_i^t )\Big]\le \bar{\Psi},   i \in \mathcal{I}, t \in \mathcal{T} \Big\}.
\end{align}
In addition, we introduce a temperature parameter $\Psi_i^{t, \text{risk}}$, of which the value shall be as large as possible, subject to the following condition. By the condition, at temperature $\Psi_i^{t, \text{risk}}$, the expected temperature will not exceed the maximum allowed value in the next time slot, even if the agent takes the action of maximum throughput.
\begin{align}\label{riskcondition}
    &\Psi_i^{t, \text{risk}} + \lambda_i \delta \mathbb{E}_{\sigma_i^t} \left[ \mu_i D_{\max} +\! \alpha_i {\rm e}^{\beta_i \Psi^{t}_i} \!+\! \gamma_i \!-\! \sigma_i^t (\Psi_i^t - \hat{\Psi}_i^t )\right]\le \bar{\Psi}, \notag\\
    &\qquad \qquad \qquad \qquad \qquad \qquad \qquad \qquad \qquad    t \in \mathcal{T}, i \in \mathcal{I}
\end{align}

\begin{figure}[ht]
	\begin{center}

        \tikzset{every picture/.style={line width=0.75pt}} 
        
        \begin{tikzpicture}[x=0.75pt,y=0.75pt,yscale=-1,xscale=1]
        
        \draw    (199.97,265.5) -- (199.97,94.24) ;
        \draw [shift={(199.97,92.24)}, rotate = 90] [color={rgb, 255:red, 0; green, 0; blue, 0 }  ][line width=0.75]    (10.93,-3.29) .. controls (6.95,-1.4) and (3.31,-0.3) .. (0,0) .. controls (3.31,0.3) and (6.95,1.4) .. (10.93,3.29)   ;
        \draw   (195.43,162.58) .. controls (190.76,162.58) and (188.43,164.91) .. (188.43,169.58) -- (188.43,204.21) .. controls (188.43,210.88) and (186.1,214.21) .. (181.43,214.21) .. controls (186.1,214.21) and (188.43,217.54) .. (188.43,224.21)(188.43,221.21) -- (188.43,258.83) .. controls (188.43,263.5) and (190.76,265.83) .. (195.43,265.83) ;
        \draw  [fill={rgb, 255:red, 255; green, 136; blue, 136 }  ,fill opacity=1 ] (199.97,117) -- (358.22,117) -- (358.22,161.83) -- (199.97,161.83) -- cycle ;
        \draw  [fill={rgb, 255:red, 254; green, 244; blue, 126 }  ,fill opacity=1 ] (199.97,161.83) -- (358.22,161.83) -- (358.22,216.17) -- (199.97,216.17) -- cycle ;
        \draw  [fill={rgb, 255:red, 199; green, 255; blue, 142 }  ,fill opacity=1 ] (199.97,216.17) -- (358.22,216.17) -- (358.22,265.5) -- (199.97,265.5) -- cycle ;
        
        \draw (75.35,198) node [anchor=north west][inner sep=0.75pt]    {$\text{Temprature range}$};
        \draw (95.35,216) node [anchor=north west][inner sep=0.75pt]    {$\text{of Zone}~ \Pi $};
        \draw (360,202.82) node [anchor=north west][inner sep=0.75pt]    {$\Psi _{i}^{t,\text{risk}}$};
        \draw (360,152.9) node [anchor=north west][inner sep=0.75pt]    {$\bar{\Psi }$};
        \draw (120.23,90) node [anchor=north west][inner sep=0.75pt]   [align=left] {Temperature};
        \draw (210,125) node [anchor=north west][inner sep=0.75pt]   [align=left] {Deny the action \\and give penalty};
        \draw (210,175) node [anchor=north west][inner sep=0.75pt]   [align=left] {Permit the action, give\\ both reward and penalty};
        \draw (210,225) node [anchor=north west][inner sep=0.75pt]   [align=left] {Permit the action \\and give reward};

        \end{tikzpicture}
        
	\end{center}
	\caption{An illustration of the reward mechanism.}\label{fig:temperature}
\end{figure}
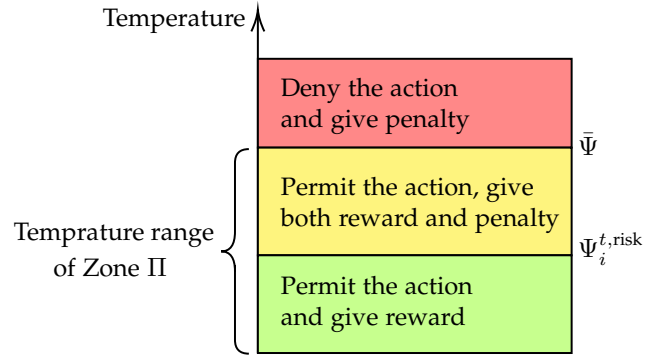

In our reward mechanism, we operate across the policies and let the system react with different rewards and decisions. Fig.~\ref{fig:temperature} illustrates the temperature range of policy zone $\Pi$ and the proposed reward structure. Specifically, for an action of time slot $t$, the system assesses the BBU chip temperature evolution, namely $\Psi_i^{t+1}$ and, responds as follows:
\begin{itemize}
    \item {$\Psi_i^{t+1} > \bar{\Psi}$}: The action is denied to prevent overheating, and a penalty (i.e., negative reward) is set.
    \item {$\Psi_i^{t, \text{risk}} < \Psi_i^{t+1} \leq \bar{\Psi}$}: In this case, there is no overheating but the condition \eqref{riskcondition} is not met. The action is permitted, with a throughput reward that is, however, reduced (i.e., a penalty term is added).
    \item {$\Psi_i^{t+1} \leq \Psi_i^{t, \text{risk}}$}: In this case, the temperature is in the safe range. The action is permitted, and the agent receives a positive reward for the throughput.
\end{itemize}
This mechanism ensures that the agent can learn to balance between the performance objectives in thermal management and throughput. However, risk condition \eqref{riskcondition} is implicit to the agent when we implement SAC, as computing the expectations above requires knowing $\pi$, and $\pi$ is not learned yet. We need to compute an explicit value of $\Psi_i^{t, \text{risk}}$. For time slot $t$ and cell $i$, we use $\Psi$ to represent the temperature, and this temperature can be represented using an optimization problem as follows.
\begin{subequations}\label{t_warning}
\begin{align}
    &\Psi_{i}^{t, \text{risk}} = \max_{\pi} \Psi \\ 
    &\text{s.t. } \Psi + \lambda_i \delta \mathbb{E}_{\sigma_i^t} \big[ \mu_i D_{i}^t +\! \alpha_i {\rm e}^{\beta_i \Psi} \!+\! \gamma_i \!-\! \sigma_i^t (\Psi - \hat{\Psi}_i^t )\big]\le \bar{\Psi}\label{warningtemperature}
\end{align}
\end{subequations}

Constraint \eqref{warningtemperature} states if $D_i^t$ is applied, then the resulting temperature $\Psi_{i}^{t+1}$ in the next time slot may not exceed the limit $\bar{\Psi}$. Solving the optimization problem exactly requires knowledge of the distribution of the heat dissipation efficiency as well as the learned policy distribution, which are not available a priori. We use the following approximation to solve problem \eqref{t_warning}:
\begin{align}
    &\Psi \!+\! \lambda_i \delta \mathbb{E}_{\sigma_i^t, D_i^t \sim \pi (\cdot| s_t )} \left[ \mu_i D_i^t \!+\! \alpha_i {\rm e}^{\beta_i \Psi} \!+\! \gamma_i \! -\! \sigma_i^t (\Psi - \hat{\Psi}_i^t )\right] \notag\\
    \le &\Psi \!+\!\lambda_i \delta \mathbb{E}_{\sigma_i^t}\left[ \mu_i D_{\max}\!+\! \alpha_i {\rm e}^{\beta_i \Psi} \!+\! \gamma_i \! -\! \sigma_i^t (\Psi - \hat{\Psi}_i^t )\right]\notag\\
    \approx & \Psi \!+\lambda_i \delta \left[ \mu_i D_{\max} \!+\! \alpha_i {\rm e}^{\beta_i \Psi} \!+\! \gamma_i \! -\! \bar{\sigma}_i^t (\Psi - \hat{\Psi}_i^t )\right]
\end{align}
where $\bar{\sigma}_i^t$ represents the estimated heat dissipation efficiency for time slot $t$ and cell $i$. Several methods are available for doing the estimation. For instance, one might employ historical average values or resort to the most conservative estimates, such as the worst value recorded in history. Hence, problem \eqref{t_warning} can be approximated as
\begin{subequations}\label{approximatedproblem}
\begin{align}
    &\Psi_{i}^{t, \text{risk}} = \max_{\pi} \Psi \\ 
    &\text{s.t. } \Psi \!+\lambda_i \delta \left[ \mu_i D_{\max} \!+\! \alpha_i {\rm e}^{\beta_i \Psi} \!+\! \gamma_i \! -\! \bar{\sigma}_i^t (\Psi - \hat{\Psi}_i^t )\right] \le \bar{\Psi}
\end{align}
\end{subequations}
Note that \eqref{approximatedproblem} can be efficiently solved using a bi-section search. Next, with the derived $\Psi_{i}^{t, \text{risk}}$, we use the reward function for cell $i$ as follows.
\begin{align}\label{eq:reward}
    &R_{\Psi_i}(s_t, a_t) =  \notag\\ 
    &\left\{
        \begin{array}{lcl}
        R_{\rho_i}(s_t, a_t) & & {\mathring{\Psi}_i^{t+1} <  \Psi^{t, \text{\rm risk}}_i}\\
        R_{\rho_i}(s_t, a_t)  \!+\! \Psi_{i}^{t, \text{risk}} \!-\! \mathring{\Psi}_i^{t+1} & & {\Psi^{t, \text{\rm risk}}_i \!\le\! \mathring{\Psi}_i^{t+1} \!<\!  \bar{\Psi}}\\
        \mathring{\Psi}_i^{t+1} - \bar{\Psi} & &\mathring{\Psi}_i^{t+1} \ge  \bar{\Psi}
        \end{array} 
    \right.
\end{align}
where $\mathring{\Psi}_i^{t+1}$ is an estimated temperature. The reward function $R_{\Psi_i}(s_t, a_t)$ described above is a piecewise function based on the estimated temperature $\mathring{\Psi}_i^{t+1}$, $\Psi^{t, \text{risk}}_i$, and the limit $\bar{\Psi}$. The decision made by \eqref{eq:reward} follows the general idea outlined earlier. There are three additional specifics, however. First, the reward (i.e., $R_{\rho_i}(s_t, a_t)$) is positive only if the action meets the resource limit (cf.~\eqref{tem_reward}).  Second, since we cannot obtain exact value of ${\Psi}_i^{t+1}$, estimated temperature value $\mathring{\Psi}_i^{t+1}$ is used in lieu.  Third, the penalty is set to be how much the estimated temperature exceeds the risk level $\Psi_{i}^{t, \text{risk}}$ or the limit $\bar{\Psi}$.

The flow of the entire reward mechanism, with respect to both the resource and temperature limits, is shown in Algorithm \ref{al:mechanism}, Note that for the UHD scenario, we use the estimated value $\bar {\sigma}_i^t$ as the input of $\sigma_i^t$.

\begin{algorithm}[tbp]
    \caption{Reward mechanism in time slot $t$} \label{al:mechanism}
    \KwIn{$\{D_i^t, \hat{\Psi}_i^t, \sigma_i^t,  i \in \mathcal{I}\}$} 
    \KwOut{$\{D_i^t, {\Psi}_i^{t+1},   i \in \mathcal{I}\}$, $R_t$} 
    Calculate $\hat{\rho} $ by Algorithm \ref{al:fixed_point}\\
    \If{$\hat{\rho} > \bar{\rho}$}{
        \For{{\rm cell} $i$, $  i \in \mathcal{I}$}{
            $ D_i^t \leftarrow 0$
        }
    }
    \For{{\rm cell} $i$, $  i \in \mathcal{I}$}{
        Obtain $\Psi_{i}^{t, \text{risk}}$ by solving \eqref{approximatedproblem} via bi-section search\\
        $\mathring{\Psi}_i^{t+1} \!\leftarrow \!\Psi_i^t \!+ \! \lambda_i \delta \left[ \mu_i D_{i}^t \!+\! \alpha_i {\rm e}^{\beta_i \Psi^{t}_i} \!+\! \gamma_i \! -\! \bar{\sigma}_i^t (\Psi_i^t \!-\!  \hat{\Psi}_i^t )\right] $\\
        \uIf{$\mathring{\Psi}_i^{t+1} > \bar{\Psi}$}{
            $D_i^t \leftarrow 0$\\ 
            $\Psi_i^{t+1} \leftarrow \Psi_i^t + \lambda_i \delta \left[  \alpha_i {\rm e}^{\beta_i \Psi^{t}_i} \!+\! \gamma_i - \bar{\sigma}_i^t (\Psi_i^t \!-\!  \hat{\Psi}_i^t ) \right]$\\
            $R_{\Psi_i} \leftarrow \bar{\Psi} - \mathring{\Psi}_i^{t+1}$
        }
        \uElseIf{$\Psi^{t, \text{\rm risk}}_i \le \mathring{\Psi}_i^{t+1} <  \bar{\Psi}$ }{
            $\Psi_i^{t+1} \leftarrow \mathring{\Psi}_i^{t+1}$\\
            $R_{\Psi_i} \leftarrow  \frac{D_i^t}{\text{std}(\{D_i^t,   i \in \mathcal{I} \})} + \Psi_{i}^{t, \text{risk}} - \mathring{\Psi}_i^{t+1}$
        } 
        \Else{
            $\Psi_i^{t+1} \leftarrow \mathring{\Psi}_i^{t+1}$\\
            $R_{\Psi_i} \leftarrow \frac{D_i^t}{\text{std}(\{D_i^t,   i \in \mathcal{I} \})}  $
        }
    }
    $R_t \leftarrow \sum_{i \in \mathcal{I}} R_{\Psi_i}$\\
    \Return {$\{D_i^t, {\Psi}_i^{t+1},   i \in \mathcal{I}\}$, $R_t$}\\
\end{algorithm} 


\section{Soft Actor-critic (SAC) Learning}

The state value function $Q(s_t, a_t)$ and the action-state value function $V(s_t)$ of the SAC are defined as follows:
\begin{align} 
    Q (s_t, a_t) &= R (s_t, a_t) + \gamma \mathbb{E}_{s_{t+1} \sim p(.|s_t, a_t)} \left[V (s_{t+1})\right], \label{eq:Q_value_function}\\
    V (s_t) &= \mathbb{E}_{a_t \sim \pi} \left[Q \left(s_t, a_t\right) - k \log\left(\pi \left(a_t | s_t\right)\right)\right]. \label{eq:V_function}
\end{align}

We employ function approximators for the policy function, the V-function, and the Q-function. Our approach includes a tractable policy $\pi_{\phi} (a_t | s_t)$, a parameterized state value function $V_{\psi} (s_t)$, and a soft Q-function $Q_{\theta} (s_t, a_t)$. Here, $\phi$, $\psi$, and $\theta$ denote network parameters. Our algorithm integrates three DNN types: V-network, Q-network, and policy network. Stochastic gradient descent is used to optimize the networks iteratively. 

To detail the parameter updates, we first focus on the update of the soft value that comes from the approximation of the state value function. The soft value function is trained by minimizing the squared residual error
\begin{align}
    J_{V}(\psi)=\mathbb{E}_{s_{t} \sim \mathcal{D}}\big [\frac{1}{2}\big ((V_{\psi}(s_{t}) -\mathbb{E}_{{a}_{t} \sim \pi_{\phi}}\big[Q_{\theta}(s_{t}, {a}_{t}) \notag\\
    -\log \pi_{\phi}({a}_{t} | s_{t})\big]\big )^{2}\big ],\label{eq:v}
\end{align}
where $\mathcal{D}$ is a replay buffer. Then, the gradient of the equation (\ref{eq:v}) is estimated using an unbiased estimator
\begin{align}
    &\hat{\nabla}_{\psi} J_{V}(\psi)=\nabla_{\psi} V_{\psi}\left(s_{i}\right)\big [V_{\psi}\left(s_{t}\right)\notag\\
    &\quad\quad \quad\quad\quad\quad\quad -Q_{\theta}\left(s_{t}, {a}_{t}\right)+\log \pi_{\phi}\left({a}_{t} | s_{t}\right)\big ],\label{eq:1}
\end{align}
wherein actions are chosen from the current policy set.

Furthermore, the soft Q-function parameter is trained by minimizing the following soft Bellman residual:
\begin{equation}
J_{Q}(\theta)=\mathbb{E}_{\left(s_{t}, {a}_{t}\right) \sim \mathcal{D}}\left[\frac{1}{2}\left(Q_{\theta}\left(s_{t}, {a}_{t}\right)-\hat{Q}\left(s_{t}, {a}_{t}\right)\right)^{2}\right],\label{eq:Q}
\end{equation}
with $\hat{Q}\left(s_{t}, {a}_{t}\right)=R\left(s_{t}, {a}_{t}\right)+\gamma \mathbb{E}_{s_{t+1} \sim p}\left[V_{\bar{\psi}}\left(s_{t+1}\right)\right]$. We use the following stochastic gradients to optimize \eqref{eq:Q}:
\begin{equation}
    \begin{aligned}
    \hat{\nabla}_{\theta} J_{Q}(\theta)=\nabla_{\theta} Q_{\theta}\left({a}_{t}, s_{t}\right)\text { ( }Q_{\theta}\left(s_{t}, {a}_{t}\right)\\
    -R\left(s_{t}, {a}_{t}\right)-\gamma V_{\bar{\psi}}\left(s_{t+1}\right)\text { ) },\label{eq:2}
    \end{aligned}
\end{equation}
where a target value network $V_{\bar{\psi}}$ is used for updates. We update the target weights to match the current value function weights periodically, i.e., 
\begin{equation}
    \bar{\psi} \leftarrow \tau \psi+(1-\tau) \bar{\psi},
\end{equation}
where $\tau$ is a target smoothing coefficient to improve stability. Finally, the policy parameter is learned by minimizing the expected Kullback-Leibler divergence:
\begin{equation}
    J_{\pi}(\phi)=\mathbb{E}_{s_{t} \sim \mathcal{D}}\left[\mathrm{D}_{\mathrm{KL}}\left(\pi_{\phi}\left(\cdot | s_{t}\right) \bigg \| \frac{\exp \left(Q_{\theta}\left(s_{t}, \cdot\right)\right)}{Z_{\theta}\left(s_{t}\right)}\right)\right].
\end{equation}
We use neural network transformation to reparameterize the policy, i.e.,
\begin{equation}
    {a}_{i}=g_{\phi}\left(\epsilon_{t} ; s_{t}\right),
\end{equation}
where $\epsilon_{t} $ is a noise vector. The objective can be rewritten as
\begin{align}
J_{\pi}(\phi)=\mathbb{E}_{s_{t} \sim \mathcal{D}, \epsilon_{t} \sim \mathcal{N}}\big[\log \pi_{\phi}\left(g_{\phi}\left(\epsilon_{t} ; s_{t}\right) | s_{t}\right)\notag\\
-Q_{\theta}\left(s_{t}, g_{\phi}\left(\epsilon_{t} ; s_{t}\right)\right)\big].\label{eq:PI}
\end{align}
The gradient of \eqref{eq:PI} can be approximated as
\begin{align}
    &\hat{\nabla}_{\phi} J_{\pi}(\phi)=\big (\nabla_{{a}_{t}} \log \pi_{\phi}\left({a}_{t} | s_{t}\big) \!-\!\nabla_{{a}_{t}} Q\left(s_{t}, {a}_{t}\right)\right) \nabla_{\phi} g_{\phi}\left(\epsilon_{t} ; s_{t}\right)\notag\\
    &\qquad\qquad\qquad\qquad\qquad\qquad\quad\quad+\nabla_{\phi} \log \pi_{\phi}\left({a}_{t} | s_{t}\right).\label{eq:3}
\end{align}

The training algorithm for SAC-based online thermal management for PCBS is summarized in Algorithm \ref{al:SAC}. Note that dual Q-networks are employed to mitigate positive bias during policy enhancement. Specifically, parameters $\theta_1$ and $\theta_2$ are used to shape two distinct Q-functions. They are trained individually, optimizing soft Bellman residual with parameters $J_Q (\theta_1)$ and $J_Q (\theta_2)$. Upon convergence, we obtain an online throughput management policy $\pi_{\phi}(a_t|s_t)$.

\begin{algorithm}[tbp]
\caption{Training for the SAC-based method}\label{al:SAC}
\KwIn{$\psi$, $\bar{\psi}$, $\phi$, $\theta_{1}$, $\theta_{2}$}
\KwOut{Trained policy $\pi_{\phi}(a_t|s_t)$}

Set learning rates $\lambda_{V}$, $\lambda_{Q}$, and $\lambda_{\pi}$ for V-network, Q-network, and policy network, respectively\\
Initialize $\psi$, $\bar{\psi}$, $\phi$, $\theta_{1}$, $\theta_{2}$ and experience memory $\mathcal{D}$\\

\For{\rm each episode}{
    \For{\rm each environment step}{
        ${a}_{t} \sim \pi_{\phi}\left({a}_{t} | s_{t}\right)$\\
        $s_{t+1} \sim p\left(s_{t+1}| s_{t}, {a}_{t}\right)$\\
        Obtain reward $R_t$ by Algorithm \ref{al:mechanism}\\
        $\mathcal{D} \leftarrow \mathcal{D} \cup\left\{\left(s_{t}, {a}_{t}, {R}_{t}, s_{t+1}\right)\right\}$
    }
    \For{\rm each gradient step}{
        Sample from current policy and compute $\hat{\nabla}_{\psi} J_{V}(\psi)$ by (\ref{eq:1})\\
        Update V-network parameter: $\psi \leftarrow \psi-\lambda_{V} \hat{\nabla}_{\psi} J_{V}(\psi)$ \\
        Sample from $\mathcal{D}$ and compute $\nabla J_{Q}(\theta_{n}), n \in \left\{1,2\right\}$ by (\ref{eq:2})\\
        Update dual Q-network parameters: $\theta_{n} \leftarrow \theta_{n}-\lambda_{Q} \hat{\nabla}_{\theta_{n}} J_{Q}\left(\theta_{n}\right),n \in \left\{1,2\right\}$\\
        Sample from the fixed distribution and compute $\hat{\nabla}_{\phi} J_{\pi}(\phi)$  by (\ref{eq:3})\\
        Update policy network parameter: $\phi \leftarrow \phi-\lambda_{\pi} \hat{\nabla}_{\phi} J_{\pi}(\phi)$ \\
        Update the target value network parameters: $\bar{\psi} \leftarrow \tau \psi+(1-\tau) \bar{\psi}$
    }
}

\Return{$\pi_{\phi}(a_t|s_t)$}
\end{algorithm}


\section{Simulation Results}

The hyperparameters of the proposed SAC scheme and the system model parameters are summarized in Tables \ref{tab:Margin_settings} and \ref{tab:System_model_settings}, respectively. We use the upper bound approximation introduced in Section~\ref{sec:noar} for performance evaluation because the approximation means we are on the safe side from thermal standpoint.

\begin{table}[ht]
    \caption{\label{tab:Margin_settings}SAC hyperparameters}
    \footnotesize
    \begin{center}
        \begin{tabular}{*{2}{l}}
            \toprule
            \midrule
            \bf{Hyperparameter}&\bf{Value}\\
            \hline
            Layers&	2 fully connected\\
            Layer hidden units	&  $64$\\
            Activation function  &  ReLU\\
            Batch size  &  256\\
            Replay buffer size  &  $1\times 10^6$\\
            Target smoothing coefficient  &   $0.005$\\
            Target update interval & $1$\\
            Discount rate  & $0.99$ \\
            Learning iterations per round  & $1$ \\
            Learning rate  & $3\times 10^{-4}$ \\
            Optimizer  & Adam  \\
            Loss  & Mean squared error  \\
            Entropy target factor & $0.2$ \\
            \midrule
            \bottomrule
        \end{tabular}
    \end{center}
\end{table}

\begin{table}[h]
    \caption{\label{tab:System_model_settings}System model parameters}
    \footnotesize
    \begin{center}
        \begin{tabular}{*{2}{l}}
            \toprule
            \midrule
            \bf{Parameter} & \bf{Value}\\
            \hline
            Number of PCBSs $I$ & $7$\\
            Number of users per cell $J_i$ & $100$\\
            Number of time slots $T$ & $100$\\
            Bandwidth of one RB $B$ & $180$kHz\\
            Safe temperature $\bar{\Psi}$ & $120 ^{\circ}$C \\
            Maximum throughput $D^{\text{max}}$ & $100$ Mbps\\
            Heat dissipation distribution & ${\sigma} \sim  U [0.25, 1.25] W/^{\circ}$C \\            
            Average ambient temperature $\tilde{\Psi}$& $[16,32] ^{\circ}$C\\
            Ambient temperature distribution & $\hat{\Psi} \sim U [0.8 \tilde{\Psi}, 1.2 \tilde{\Psi}] ^{\circ}$C \\
            Reciprocal of thermal capacitance $\lambda$ & $0.007 ^{\circ}$C/J \cite{BBUcooling}\\
            Throughput/dynamic-power ratio $\mu$ & $0.6$ W/Mbps \cite{dyn}\\
            Duration of a time slot $\delta$ & $30$ seconds\\
            \midrule
            \bottomrule
        \end{tabular}
    \end{center}
\end{table}

\subsection{Convergence}

We assess the convergence of our SAC-based thermal management (SAC-TM) strategy for both IHD and UHD scenarios, along with a comparison with the proximal policy optimization (PPO) algorithm \cite{schulman2017proximal}, referred to as PPO-TM. Note that the PPO algorithm uses the same configuration, including input design and reward functions, as SAC-TM. Figure~\ref{fig:reward} illustrates the average reward over episodes. The results indicate that SAC-TM tends to converge more rapidly than PPO-TM in both scenarios. Furthermore, it can be observed that in the IHD scenario, where the system has more known information, we obtain better performance, albeit a greater number of episodes for learning is required.

\input{FIG_result_reward_line_chart}

\subsection{Policy Performance Evaluation}

We assess the effectiveness of the SAC-TM policy by comparing it to what we refer to as Oracle. In Oracle, the global optimum is computed offline by convex optimization tools, such as CVX \cite{cvx}, by assuming prior knowledge of complete information across all time slots. Figure~\ref{fig:oracle} shows that the average throughput per cell with respect to the (average) ambient temperature. The average is taken for 1000 instances for each value of average ambient temperature.  The results clearly demonstrate that SAC-TM achieves very good performance, and outperforms PPO-TM. For an average ambient temperature of $16^{\circ}$C, the policy yields $86.9\%$ and $82.6\%$ of the global optimum for the IHD and UHD scenarios, respectively. Higher ambient temperature results in lower throughput as expected, but the (absolute) throughput difference to the global optimum remains largely the same.

\pgfplotsset{width=.49\textwidth, height=5.5cm, compat=1.9}
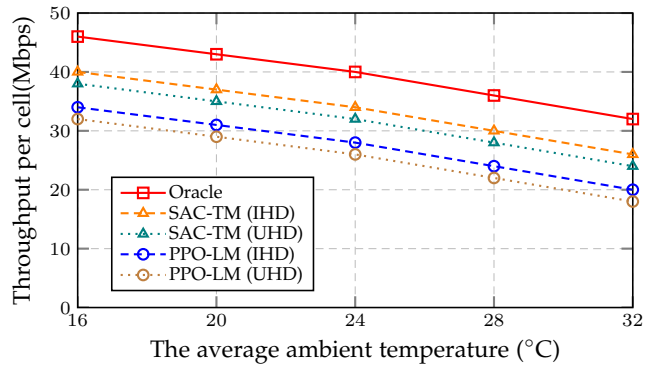
\begin{figure}[t]
        \begin{center}
        \scriptsize
    		\begin{tikzpicture}
        		\begin{axis}[
        		    xlabel={The average ambient temperature ($^\circ$C)},
        		    ylabel={Throughput per cell(Mbps)},
        		    xmin=16, xmax=32,
        		    ymin=0, ymax=50,
        		      xtick={16, 20, 24, 28, 32 },
        		    ytick={0, 10, 20, 30, 40, 50},
                        legend style={at={(0.06, 0.25)},anchor=west},
        		    ymajorgrids=true,    
        		    xmajorgrids=true,
        		    grid style=densely dashed,
        		    tick label style={font=\scriptsize},
        		    label style={font=\small},
        		    legend style={font=\scriptsize},
        		]
        		
                    \addplot[ color=red, mark=square, line width=0.8pt]     
            		coordinates { 
                    ( 16 , 46)
                    ( 20 , 43 )
                    ( 24 , 40 )
                    ( 28 , 36 )
                    ( 32 , 32 )
            		};

                    \addplot[ color=orange, mark=triangle, densely dashed, mark options={solid}, line width=0.8pt]
            		coordinates {
                    ( 16 , 40)
                    ( 20 , 37 )
                    ( 24 , 34 )
                    ( 28 , 30 )
                    ( 32 , 26 )
            		}; 
            		
                    \addplot[ color=teal, mark=triangle, dotted, mark options={solid}, line width=0.8pt]
            		coordinates {
                    ( 16 , 38)
                    ( 20 , 35 )
                    ( 24 , 32 )
                    ( 28 , 28 )
                    ( 32 , 24 )
            		};            		
                                  \addplot[ color=blue, mark=o, densely dashed, mark options={solid}, line width=0.8pt]
            		coordinates {
                    ( 16 , 34)
                    ( 20 , 31 )
                    ( 24 , 28 )
                    ( 28 , 24 )
                    ( 32 , 20 )
            		};            		
                    
                    \addplot[ color=brown, mark=o, dotted, mark options={solid}, line width=0.8pt]
            		coordinates {
                    ( 16 , 32)
                    ( 20 , 29 )
                    ( 24 , 26 )
                    ( 28 , 22 )
                    ( 32 , 18 )
            		};

            		\legend{Oracle, SAC-TM (IHD), SAC-TM (UHD), PPO-LM (IHD), PPO-LM (UHD)}
        		
        		\end{axis}
    		\end{tikzpicture}
    		
        \end{center}
    \caption{Throughput per cell with respect to the average ambient temperature. (The average heat dissipation efficiency is $0.75$.)}\label{fig:oracle}
\end{figure}

For the IHD scenario, Figure \ref{fig:temperature_result} illustrates the maximum BBU chip temperature (of all seven PCBSs) and the total throughput under three policies: SAC-TM, PPO-TM, and a naive adaptive policy (cf.~Section~\ref{subsec:PCBS}) for one data instance. For each time slot, this policy considers the maximum cell-uniform throughput that is permitted by the resource constraint. This throughput level is used if the resulting temperature by the end of the time slot is within the limit for all cells; otherwise, the policy takes no action to let the BBU cool down. Overall, SAC-TM clearly outperforms the other two. With SAC-TM, the temperature stays close but below the limit.  With the naive adaptive policy, the temperature clearly has more fluctuations, and the throughput is considerably lower than that via SAC-TM. PPO-TM performance is somewhere between SAC-TM and the native adaptive policy.

\input{FIG_result_temparature_throughput_line_chart}

\subsection{Reward Mechanism Assessment}

A basic reward function for our problem is simply letting the reward equal the total throughput, i.e., $R(s_t, a_t) = \sum_{i \in \mathcal{I}} \sum_{t \in \mathcal{T}} D_{i}^t$. In Fig.~\ref{fig:Learning}, we compare this basic reward function to the one proposed in Section~\ref{sec:temp}, by showing the agent's longevity in terms of steps taken in the UHD scenario in the training session. For each training session, the step count is 100. The training concludes prematurely if the agent's actions breach the temperature constraint \eqref{T_C2} or the resource constraint \eqref{LC_C2}.  Our designed reward function enables the agent to consistently achieve an average of 100 steps of longevity, whereas SAC with the basic reward function, referred to as the naive SAC; fails to reach this number even after 100,000 episodes. This underscores the effectiveness of the proposed SAC in guiding the agent. Note that the performance curve of the proposed SAC shows slight fluctuation between the 35,000th and 40,000th episodes. This is due to the UHD scenario, where the agent has to rely on the estimation of heat dissipation efficiency.  There exists a possibility that the actual temperature development may exceed the estimated one, potentially leading to overheating. Table~\ref{tab:results} presents the overheating rate and throughput for the two policies for $1000$ instances. The results demonstrate that the proposed SAC is superior.

\input{FIG_result_step_line_chart}

\begin{table}[ht]
    \caption{\label{tab:results} Results for verifying the proposed mechanism}
    \footnotesize
    \begin{center}
        \begin{threeparttable}[b]
            \begin{tabular}{*{3}{c}}
                \toprule
                \midrule
                & {\bf Proposed SAC} &{\bf Naive SAC} \\
                \midrule
                Overheating Rate & $0.1\%$& $12.5\%$    \\
                \midrule
                Throughput (Mbps) & $33.2$ & $24.1$  \\
                \midrule
                \bottomrule
            \end{tabular}
        \end{threeparttable}
    \end{center}
\end{table}

\subsection{Generalization for  Mobility}

The proposed RL is intended for online use, and hence it will continuously learn and adapt the policy to user mobility. Nevertheless, it is of significance to examine how well a learned policy generalizes with respect to mobility. To evaluate this aspect, we compare the performance of a policy for a new scenario where some users' positions are changed, versus that of training a specific model tailored to the new scenario, for various degrees of user mobility. The degree of mobility is quantified in the percentage of cell diameter. That is, all users move in some random direction, with a distance specified in percentage of cell diameter.  The results are presented in Fig.~\ref{fig:generalication} for a representative instance. For up to 15\% of the degree of mobility, the initial policy works well in throughput as well as the risk of overheating. Thus, the proposed scheme is sufficiently robust for online learning and decision-making.

\pgfplotsset{width=.45\textwidth, height=5.5cm, compat=1.9}
\begin{figure}[t]
    \begin{tikzpicture}
        \centering
        \begin{axis}[
                xtick style={draw=none}, ytick pos=left, 
                ybar, axis on top,
                tick label style={font=\scriptsize},
                xtick align=inside,
                xlabel={Degree of mobility},
                ylabel style={align=center},
                ylabel={Throughput per cell (Mbps)},
                ymin = 0, ymax = 48,
                symbolic x coords={1, 2, 3, 4, 5},
                xticklabels={, $5\%$, $10\%$, $15\%$, $20\%$, $25\%$},
                enlarge x limits = 0.15,
                legend style = {at = {(0.03, 0.97)},
                anchor = north west, 
                legend columns = -1},
                legend image code/.code={\draw [#1] (0cm,-0.1cm) rectangle (0.2cm,0.15cm); },
                bar width=.016\textwidth,
                tick label style={font=\scriptsize},
                label style={font=\small},
                legend style={font=\scriptsize},
                minor x tick num=1,
                xminorgrids,
                minor tick length=0,
                grid style={dashed},
        ]
            \addplot  coordinates { (1, 32) (2, 32) (3, 32) (4, 27) (5, 15)};
            \addplot  coordinates { (1, 32.3) (2, 32.5) (3, 33.1) (4, 33.3) (5, 33.1)};
            \legend{Initial, Specific}
        \end{axis}
		\begin{axis}[
    		xtick style={draw=none}, ytick pos=right, 
		    ylabel={Non-overheating rate (\%)},
		    xticklabels=\empty,
		    ymin = 0, ymax=130,
		    ytick = {0, 50, 100},
                enlarge x limits = 0.15,
                legend style = {at = {(0.5, 0.97)},
                anchor = north west, 
                legend columns = -1},
		    grid style=densely dashed,
		    tick label style={font=\scriptsize},
		    label style={font=\small},
		    legend style={font=\scriptsize},
		]
                \addplot[ color=teal, mark=triangle, line width=0.8pt]  
    		coordinates {(1, 99.5) (2, 99) (3, 98) (4, 88) (5, 61) };
                \addplot[ color=orange, mark=o, line width=0.8pt]     
    		coordinates {(1, 100) (2, 100) (3, 100) (4, 100) (5, 100) };
                \legend{Initial, Specific}
		\end{axis}
    \end{tikzpicture}
    \caption{Performance comparison of initial and new policies
for various degrees of user position update.} \label{fig:generalication}
\end{figure}


\section{Conclusions}

For outdoor PCBSs, 
the trade-off between utilization (i.e., throughput) and
thermal development is vital. The online optimization problem we
studied presents two challenges: Uncertain (future) heat dissipation
and interference coupling. As our main conclusion, provided that the
reward mechanism is specifically tailored to the problem, an RL
approach based on SAC exhibits strong performance in delivering high
throughput while keeping the BBU temperature beneath but close to the
maximum limit.  In fact, its performance is close to that of offline
global optimum, hence there is not much room for
improvement. Moreover, the approach generalizes well with respect to
mobility, enabling it to be embedded into an online learning and
decision-making framework. For future study, one topic of interest is
to incorporate additional mechanisms beyond RL itself, such as the
integration of RL and robust optimization techniques.


\appendices

\section{Proof of Lemma~\ref{lemma:sif}} \label{sec:proofsif}
\begin{proof}
First, it is obvious that function $f_{ij}(\boldsymbol{\rho}_{-ij})$ has monotonicity, that is, $f_{ij}(\boldsymbol{\rho}'_{-ij}) \geq f_{ij}(\boldsymbol{\rho}_{-ij})$, if $\boldsymbol{\rho}'_{-ij} \geq \boldsymbol{\rho}_{-ij}$. Next, we show that the function also exhibits the so called scalability property, which requires $\alpha f_{ij}(\boldsymbol{\rho}_{-ij}) > f_{ij}(\alpha \boldsymbol{\rho}_{-ij})$ for any scalar $\alpha > 1$. 

Let $\rho^\Xi = \sum_{\ell \in \mathcal{I} \setminus \{i\}} \sum_{k \in \mathcal{J}_\ell} \rho_{\ell k} h_{\ell k j}$. Then, by the expression of $f_{ij}(\boldsymbol{\rho}_{-ij})$, scalability holds, if 

\begin{align}
    \alpha \log \left[ 1+ \frac{h_{ij}}{\displaystyle \alpha \rho^\Xi + N_c} \right] > 
\log \left[ 1+ \frac{h_{ij}}{\displaystyle \rho^\Xi + N_c} \right].
\end{align}

The above holds if 

\begin{align}
\label{eq:scala}
    \left[ 1+ \frac{h_{ij}}{\displaystyle \alpha \rho^\Xi + N_c} \right]^\alpha > 
 1+ \frac{h_{ij}}{\displaystyle \rho^\Xi + N_c} .
\end{align}

By the generalized Bernoulli inequality (with real exponent), we have 

\begin{align}
\label{eq:bern}
    \left[ 1+ \frac{h_{ij}}{\displaystyle \alpha \rho^\Xi + N_c} \right]^\alpha > 
 1 + \frac{\alpha h_{ij}}{\displaystyle \alpha \rho^\Xi + N_c} = 
1 + \frac{h_{ij}}{\displaystyle \rho^\Xi + \frac{N_c}{\alpha}}.
\end{align}

By comparing the right-hand sides of \eqref{eq:scala} and \eqref{eq:bern}, we have proven scalability. Hence $f_{ij}(\boldsymbol{\rho}_{-ij})$ is an SIF~\cite{yates1995framework}.
\end{proof}

\section{Proof of SIF with the Approximations}
\label{sec:proofasif}
\begin{proof}
With either of the two approximations, we obtain the system of the following form.
\begin{align}\label{eq:cellrhofunction}
f_{ij}(\boldsymbol{\rho}_{-i}) = \frac{D_i}{B |\mathcal{J}_i| \log \left[ 1+ \frac{h_{ij}}{ \sum\limits_{\ell \in \mathcal{I} \setminus \{i\}} \rho_{\ell} \gamma_\ell + BN_0} \right]}
\end{align}

For the long range approximation, we have $\gamma_\ell = E|g_{\ell j}|^2$, and in the upper bound approximation, $\gamma_\ell = \frac{E}{N_T}\|\mathbf{H}_{lj}\mathbf{x}_{lj}^\bigstar \|^2$.  It is clear that \eqref{eq:cellrhofunction} has a similar (but simpler) structure to \eqref{eq:rhofunction}. The SIF result is then obtained by the same arguments used in the proof of Lemma~\ref{lemma:sif}.
\end{proof}


\bibliographystyle{IEEEtran}
\bibliography{mybibtex}

\end{document}